\newtheorem{prop}{Proposition}
\newtheorem{theorem}{Theorem}
\newtheorem{lemma}{Lemma}
\newcommand{\p}[1]{\left(#1\right)}
\newcommand{\st}[1]{\left\{#1\right\}}
\newcommand{\bk}[1]{\left[#1\right]}
\newcommand{\quot}[1]{``#1''}
\newcommand{\lb}{\hspace*{\fill}}
\newcommand{\namehead}[3]{
\lstset{breaklines=true, morecomment=[l]{//}, frame=single, showstringspaces=false, numbers=left}
\begin{flushright}
Nathan Fox\\
#2\\
#3\\
\end{flushright}
\ifstrequal{#1}{.}{}{
\begin{center}
{\Large Homework #1}
\end{center}}
}
\renewcommand{\bk}[1]{[#1]}
\renewcommand{\p}[1]{(#1)}
\newcommand{\pb}[1]{\left(#1\right)}
\newcommand{\T}[1]{T\p{#1}}
\newcommand{\ps}{\textsc{pspace}}
\newcommand{\ttt}{true}
\newcommand{\fff}{false}
\newcommand{\mathsc}[1]{\text{\textsc{#1}}}
\newcommand{\game}[1]{\textsc{\lowercase{#1}}}
\newcommand{\sgame}[2]{$\mathsc{\lowercase{#1}}_{#2}$}
\begin{document}

\title{On the Computational Complexities of Various Geography Variants}
\author{Nathan Fox\footnote{Mathematics and Statistics, Canisius College, Buffalo, New York,
\texttt{fox42@canisius.edu}}, Carson Geissler\footnote{The College of Wooster, Wooster, Ohio,
\texttt{cgeissler20@wooster.edu}
}}
\date{}

\maketitle

\abstract{Generalized Geography is a combinatorial game played on a directed graph. Players take turns moving a token from vertex to vertex, deleting a vertex after moving the token away from it. A player unable to move loses. It is well known that the computational complexity of determining which player should win from a given position of Generalized Geography is \ps-complete. We introduce several rule variants to Generalized Geography, and we explore the computational complexity of determining the winner of positions of many resulting games. Among our results is a proof that determining the winner of a game known in the literature as Undirected Partizan Geography is \ps-complete, even when restricted to being played on a bipartite graph.}

\section{Introduction}\label{s:intro}
The underpinning for our work is the game of Generalized Geography, also known as Generalized Vertex Geography, and, henceforth in this paper, simply as \game{Geography}. \game{Geography} is played on a directed graph $G$, with a token that begins on some vertex $v$ of $G$. Players take turns moving the token along arcs of $G$. After a player moves the token, that player deletes the vertex the token moved from, so that no later move can visit that vertex. A player unable to move loses. The name \quot{Geography} comes from the children's game of the same name in which players take turns naming cities. Cities cannot repeat, and the first letter in each city's name must be the same as the last letter of the previous city's name. The cities correspond to vertices in $G$, the letter matching restriction to arcs in $G$, and the \quot{no repeats} rule to the vertex deletion rule.

As described, \game{Geography} is an impartial combinatorial game with the normal play convention. Combinatorial games are finite games of perfect information played with two players. Under normal play, a player loses when unable to move. Finally, in an impartial game, the same set of moves is available in a given position regardless of which player's turn it is. In \game{Geography}, this amounts to the fact that there is a single token that players take turns moving. This is as opposed to partizan games, like chess, in which each player has their own set of moves available to them in a given position~\cite{winways}.

A natural question to ask about combinatorial games is, given a position, which player should win assuming both players play optimally? For some games, such as Nim, it is straightforward to decide which player wins~\cite{nim}. For other games, such as chess, it is far less clear~\cite{chess}. Computational complexity is a formal lens through which we can quantify the difficulty of determining which player should win. 
For brevity, we henceforth refer to this question simply as the \emph{complexity of the game}. 
A classical result states that 
\game{Geography} is \ps-complete~\cite{geogreduct}. 
On the other hand, since there is a polynomial time algorithm to determine the winner of Nim, 
Nim is in \textsc{p}. No problem in \textsc{p} is \ps-complete unless \textsc{p}=\ps, a unsolved proposition stronger than \textsc{p}=\textsc{np} that is generally expected to be false.

There is a long tradition of studying the 
complexities of variants of \game{Geography}~\cite{fsugeog, fsgeog, monti2018variants, bodlaender1993complexity, darmann2014shortest, schaefergeog}. In this paper, we examine four ways of modifying \game{Geography}, which can be combined to yield a multitude of \game{Geography} variants. Our motivation comes from the board game \textit{Santorini}~\cite{santorini}; our variants adjust the rules of \game{Geography} to become more like \textit{Santorini}. Here are the variations we consider:
\begin{description}
\item[Directed versus Undirected:] \game{Geography} is played on a directed graph, but it can just as easily be played on an undirected graph.
\item[Impartial versus Partizan:] As mentioned before, \game{Geography} is an impartial game. The game can be made partizan by replacing the single token with two tokens, one controlled by each player, with the additional rule that the tokens cannot occupy the same vertex simultaneously~\cite{fsgeog}. We refer to \game{Geography} with only this alteration as \game{partizan Geography}.
\item[Restricted versus Free Deletion:] In \game{Geography}, a move consists of moving a token from a vertex to a successive vertex and then deleting the vertex moved from. In this paper, we refer to this rule as \emph{restricted deletion}. We contrast this with a different move rule, which we call \emph{free deletion}. Under free deletion, a move consists of moving a token from a vertex to a successive vertex and then deleting a vertex that could have been moved from. In the undirected case, this amounts to deleting any neighbor of the vertex moved to; in the directed case it amounts to deleting any predecessor of the vertex being moved to.
\item[Stacked Vertices:] In this variant, each vertex has a positive integer height associated to it. When a vertex would be deleted, instead its height is decreased by $1$. If the height of a vertex becomes $0$, it is deleted. In all cases, we treat the maximum allowed height as a fixed parameter, thereby still guaranteeing that the length of the game is polynomial in the number of vertices in the graph (Proposition~\ref{prop:pspace}).

Without introducing any additional modifications, stacked vertices cannot actually change the complexity of a game, because a game with stacked vertices would be equivalent to a game without stacked vertices where a vertex of height $k$ is replaced by $k$ vertices (analogously to the proof of Proposition~\ref{prop:1vs2}). So, along with the setup of stacked vertices comes a moving restriction: From a vertex of height $k$, a move is only legal to a successive vertex of height at least $k-1$. Note that this move restriction only becomes relevant for maximum heights of $3$ or greater; any game with maximum height $2$ is equivalent to a game with maximum height~$1$ (Proposition~\ref{prop:1vs2}).
\end{description}
Our notation for a game consists of three letters and a numerical subscript. The first letter, \game{D} or \game{U}, stands for directed or undirected. The second letter, \game{I} or \game{P}, stands for impartial or partizan. The third letter, \game{R} or \game{F}, stands for restricted or free deletion. Finally, a numerical subscript denotes the maximum allowed height of a stacked vertex. A maximum height of $1$ is equivalent to not allowing stacked vertices, so we may omit the subscript when it would be $1$. 
Under this notation, \game{Geography} is denoted \game{DIR} or \sgame{DIR}{1}.

Table~\ref{t:summary} contains a summary of known results and our results. Of particular note are all of the games with restricted deletion and no stacking, as all have been considered previously. The game \game{UIR}, known in the source studying it as \game{Undirected Vertex Geography}, has a straightforward solution~\cite{fsugeog}, which extends to the stacked game \sgame{UIR}{2}. That same source mentions the game \game{UPR} near the end, stating only that it is \textsc{np}-hard. We prove in Section~\ref{ss:upr} that it is, in fact, \ps-complete. The game of \game{Partizan Geography} (\game{DPR}) 
is known to be \ps-complete, even when restricted to bipartite graphs with maximum in/out degree at most two and maximum total degree at most three~\cite{fsgeog}. 
The game of \game{Geography} (\game{DIR}) 
is also still \ps-complete under such restrictions. It remains \ps-complete if the graph is additionally required to be planar~\cite{geogreduct}.

The proofs in this paper are organized with the non-stacking {\ps} reductions first (Theorems~\ref{thm:dif} through~\ref{thm:upf}), starting with the simplest reduction. Then, we prove Theorem~\ref{thm:uif} about the game \game{UIF}, and we conclude with Theorem~\ref{thm:uir4} about \sgame{UIR}{4}.

\begin{table}
\begin{center}
\begin{tabular}{|c|c|c|}\hline
\textbf{Game} & \textbf{Complexity} & \textbf{Analysis}\\\hline\hline
\game{DIR} & \ps-complete, even for bipartite with degree $\leq 3$ & \cite{schaefergeog,geogreduct}\\\hline
\game{DIF} & \ps-complete, even for bipartite graphs & Section~\ref{ss:dif}\\\hline
\game{DPR} & \ps-complete, even for bipartite with degree $\leq 3$ & \cite{fsgeog}
\\\hline
\game{DPF} & \ps-complete, even for bipartite graphs & Section~\ref{ss:dpf}\\\hline
\sgame{UIR}{2} & In \textsc{p} & \cite{fsugeog}
\\\hline
\sgame{UIR}{4} & \ps-complete, even for bipartite with degree $\leq 3$ & Section~\ref{s:uik}\\\hline
\sgame{UIF}{2} & In \textsc{p} for bipartite graphs & Section~\ref{s:ui1}\\\hline
\game{UPR} & \ps-complete, even for bipartite graphs & Section~\ref{ss:upr}\\\hline
\game{UPF} & \ps-complete & Section~\ref{ss:upf}\\\hline
\end{tabular}
\end{center}
\caption{Summary of computational complexities of \game{Geography} variants}
\label{t:summary}
\end{table}

\section{Preliminaries}\label{s:prelim}

\subsection{Notation and Conventions}
When we refer to a \emph{game}, we refer to all possible instances or positions of that game. All of our games have two players, whom we conventionally name Left and Right. In our proofs, we use the pronouns \emph{she} for Left and \emph{he} for Right to help distinguish between the players. In a particular position, either player might be the active player, though, following convention, positions we construct always have Left as the active player, if possible. Furthermore, if a game is impartial, a position is called an N-position if the next player to move should win and a P-position otherwise.

Sometimes, we want to refer to multiple games at once. In place of \game{D} or \game{U}, we write \game{X} when it can be either. Similarly, we can write \game{Y} in place of \game{I} or \game{P} and \game{Z} in place of \game{R} or \game{F}.  
The game \sgame{XYZ}{k} has an underlying graph $G$ that could either be directed or undirected. If $G$ is undirected, given vertices $v$ and $w$ in $G$, we write $v\sim w$ if $v$ and $w$ are adjacent. If $G$ is directed, given vertices $v$ and $w$ in $G$, we write $v\to w$ if there is an arc from $v$ to $w$. In this case, we say that $w$ \emph{succeeds} $v$ and that $v$ \emph{precedes} $w$. 
A move in \sgame{XYZ}{k} consists of the active player moving a token from some vertex $v$ to another vertex $w$ and then reducing the height of a vertex $u$ by $1$ (and deleting $u$ if its height reaches $0$). For convenience of notation, we encode such a move as $\pb{v,w,u}$. If $u=v$ (as must be the case under restricted deletion), we say that the move is \emph{regular}; otherwise it is \emph{irregular}.

This paper includes several diagrams of positions. Vertices of graphs are denoted by circles which may contain symbols. The symbol $T$ denotes the token in an impartial game; the symbols $L$ and $R$ denote Left's and Right's tokens in a partizan game. A number inside a circle denotes the height of a vertex in a game with stacking. Other symbols in a circle have no meaning within the game and are present only to clarify the diagram. See Figure~\ref{fig:2to1} on page~\pageref{fig:2to1} for an example of partizan positions with and without stacking, and see Figure~\ref{fig:dif} on page~\pageref{fig:dif} for an example of an impartial position without stacking.

In this paper, we are concerned with the computational complexity of a given game. All of the games \sgame{XYZ}{k} are in {\ps} (Proposition~\ref{prop:pspace}), so proving \ps-completeness amounts only to proving \ps-hardness. We are interested in separating games that are in \textsc{p} from games that are \ps-complete, so we employ polynomial-time reductions~\cite{karp}. We use Karp's notation $K_1\leq_p K_2$ to mean that game $K_1$ is polynomial-time-reducible to game $K_2$, which means that the complexity of $K_2$ is at least that of $K_1$. Several of the proofs in this paper proceed via reductions. All of our reductions take place in polynomial time, because in each case a position is being transformed into another position, the latter of which has size polynomial in the size of the former, and where each step of the transformation can be performed in polynomial time.

\subsection{Quantified Boolean Formulas}
In order to prove that games are \ps-complete, we need a \ps-complete problem to start reductions from. The classical choice is the decision problem \game{TQBF}, sometimes known as \game{QSAT}, which asks whether a given quantified boolean formula is true. Without loss of generality, we can always assume that the expression is given in the form $\exists x_1\forall x_2\exists x_3\cdots\exists x_{n-1}\forall x_n P\p{x_1,x_2,\ldots,x_n}$ for some predicate $P$. This problem is still \ps-complete if we require $P$ to be a 3-CNF formula (conjunctive normal form with exactly three literals per clause)~\cite{gareyjohnson}. For example, we might consider an expression such as $\exists x_1\forall x_2\exists x_3\forall x_4\pb{x_1\vee\bar{x}_2\vee x_3}\wedge\pb{x_1\vee\bar{x}_3\vee\bar{x}_4}$. Conventionally, we use $n$ to refer to the number of variables (always even) and $m$ to refer to the number of clauses. Henceforth in this paper, the term \game{TQBF} assumes this normal form with all mentioned restrictions in place. 

Note the distinction between \emph{variable} and \emph{literal}. A variable is one of the symbols $x_i$; a literal is a variable or its negation. We occasionally say that a literal \emph{contains} a variable.

\game{TQBF} can be formulated as a game as follows. Two players, Left and Right, are given a 3-CNF formula $P\pb{x_1,x_2,\ldots,x_n}$. The players, starting with Left, take turns setting the truth values of the variables in increasing order of the index. (Left starts by setting $x_1$, then Right sets $x_2$, etc.) Once all variables have been assigned truth values, Left wins if $P\pb{x_1,x_2,\ldots,x_n}$ is true, and Right wins if it is false. Deciding who wins this game is precisely the same as deciding the truth value of $\exists x_1\forall x_2\exists x_3\cdots\exists x_{n-1}\forall x_n P\p{x_1,x_2,\ldots,x_n}$. 

\subsection{Simple Propositions}
The following results provide some basic relationships between our games and a foundation for the more substantial results to follow in later sections.

\begin{prop}\label{prop:pspace}
The game \sgame{XYZ}{k} is in \ps.
\end{prop}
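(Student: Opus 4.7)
The plan is to invoke the standard characterization that any two-player combinatorial game of perfect information whose positions admit polynomial-size encodings and whose plays have polynomially bounded length lies in \ps, via a recursive minimax evaluation that uses space proportional to the game length times the position size.

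First I would describe the encoding of a position of \sgame{XYZ}{k}. A position is specified by the underlying (possibly directed) graph $G$, a height $h\pb{v}\in\st{0,1,\ldots,k}$ for each vertex, and the location of the token (or both tokens, in the partizan case), together with a bit indicating whose turn it is. Since $k$ is a fixed parameter, each height takes $O\pb{1}$ bits, so the entire position fits in $O\pb{\ab{V\pb{G}}+\ab{E\pb{G}}}$ bits.

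Next I would bound the length of play by a potential argument. Define the potential of a position to be $\Phi=\sum_{v\in V\p{G}}h\pb{v}$. Every move $\pb{v,w,u}$, whether regular or irregular, decreases the height of some vertex $u$ by exactly $1$, and no other height changes. Hence $\Phi$ strictly decreases by $1$ each move. Since the initial potential is at most $k\cdot\ab{V\pb{G}}$, any play terminates after at most $k\cdot\ab{V\pb{G}}$ moves, which is polynomial in the size of the input because $k$ is fixed.

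Finally I would run the standard recursive algorithm: the active player wins from a position iff either they have no legal move and it is their opponent's loss by convention vacuously (handled in the base case), or there exists a legal move leading to a position from which the opponent loses. Enumerating legal moves at each node requires only polynomial workspace (examining successors of the token vertex and candidate deletion targets), and the recursion depth is bounded by the polynomial game length established above, so the total space used is polynomial. The only point requiring care is the free-deletion rule, where the deleted vertex is not the one vacated; but as noted, the potential argument only uses the fact that some vertex's height drops by $1$ each move, so it handles both \game{R} and \game{F} uniformly, and likewise both the impartial and partizan cases.
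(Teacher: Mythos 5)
Your proposal is correct and follows essentially the same route as the paper: encode positions in polynomial size, bound the number of moves by the total initial height (at most $k$ times the number of vertices), and evaluate the game tree by a depth-first/minimax search whose recursion depth is the game length and whose per-level workspace is polynomial. The paper simply makes the space bound explicit as $O\pb{n^3}$ rather than leaving it as a product of depth and per-level cost.
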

\begin{proof}
To represent a position in \sgame{XYZ}{k}, we need to encode the underlying graph $G$, the height of each vertex, the locations and owners of all tokens, and which player is to move. If $G$ has $n$ vertices, it can be encoded using $O\p{n^2}$ bits by listing its vertices and its arcs/edges. Since we treat $k$ as a constant, $O\p{n}$ bits are required to encode the heights of all of the vertices. Then, since each of our variants uses at most two tokens, the remainder of the information can be encoded using $O\p{\log n}$ bits, identifying by a numerical index each vertex with a token. Hence, a position with $n$ vertices can be encoded using $O\p{n^2}$ bits, a polynomial in $n$.

We must now show that the winner can be determined using space polynomial in $n$. Every move of the game must reduce the height of a vertex, so the total number of moves in the game is at most the total height of all of the vertices. This, in turn, is at most $nk$, as each vertex can have height at most $k$. So, we can explore the entire game tree with backtracking, needing to store at most $O\p{n}$ positions in memory simultaneously. Each position can be stored using $O\p{n^2}$ bits, so we need to store at most $O\p{n^3}$ bits of game information. To track the winner in each position, we need only an additional bit per position, so we can determine the winner using $O\p{n^3}$ space, a polynomial in $n$, as required.
\end{proof}

\begin{prop}\label{prop:redtriv}
If $k<\ell$, the game \sgame{XYZ}{k} is no more complex than \sgame{XYZ}{\ell}, even if the same restriction is placed on the underlying graphs for both.
\end{prop}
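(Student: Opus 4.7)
The plan is to observe that the identity map on positions gives a valid polynomial-time reduction from \sgame{XYZ}{k} to \sgame{XYZ}{\ell}. Formally, given a position of \sgame{XYZ}{k} consisting of an underlying graph $G$, a height assignment $h : V\pb{G} \to \st{1,\ldots,k}$, token placements, and the active player, we simply interpret the exact same data as a position of \sgame{XYZ}{\ell}. This transformation is performed in constant time and trivially preserves any graph-theoretic property imposed on $G$ (bipartiteness, degree bounds, planarity, etc.), so the ``even if the same restriction is placed'' clause is handled for free.

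Next I would verify that the new instance is well-formed: since every height is at most $k \leq \ell$, the height assignment respects the maximum-height bound of \sgame{XYZ}{\ell}. Because heights only ever decrease during play, every position reachable from this starting position also has all heights at most $k \leq \ell$, so well-formedness is maintained throughout the game.

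I would then check that the games play identically from this shared position. The legality of a move $\pb{v,w,u}$ depends on the graph structure, the current token locations, whether $u$ could have been legally moved from (for the deletion rule), and the moving restriction that from a vertex of current height $h$ one may only move to a successor of current height at least $h-1$. None of these conditions reference the maximum allowed height parameter; they depend only on the current heights, which lie in $\st{1,\ldots,k}$ in both games. The normal-play winning condition (a player loses when unable to move) is likewise identical in both games. Consequently the two game trees rooted at the shared position are isomorphic, so the same player wins in \sgame{XYZ}{k} as in \sgame{XYZ}{\ell}, giving the desired reduction $\mathsc{xyz}_k \leq_p \mathsc{xyz}_\ell$.

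There is no real obstacle here; the only thing worth stating explicitly is the check in the previous paragraph that the move-restriction rule uses the actual current height of a vertex rather than the global parameter, so increasing the cap from $k$ to $\ell$ introduces no new moves and removes none.
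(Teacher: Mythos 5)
Your proposal is correct and takes essentially the same approach as the paper, which simply observes that any instance of \sgame{XYZ}{k} is already an instance of \sgame{XYZ}{\ell} with the same underlying graph. You spell out the well-formedness and rule-equivalence checks that the paper leaves implicit, but the underlying idea (the identity map as a reduction) is identical.
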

\begin{proof}
If $k<\ell$, any instance of \sgame{XYZ}{k} is also an instance of \sgame{XYZ}{\ell} with the same underlying graph.
\end{proof}

\begin{prop}\label{prop:udtriv}
The game \sgame{UYZ}{k} is no more complex than \sgame{DYZ}{k}.
\end{prop}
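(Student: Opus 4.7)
The plan is to give a direct polynomial-time reduction from \sgame{UYZ}{k} to \sgame{DYZ}{k} by replacing each undirected edge with a pair of oppositely-oriented arcs. Given an instance of \sgame{UYZ}{k} with underlying undirected graph $G$, heights $h\p{v}$, token placements, and active player, construct the corresponding instance of \sgame{DYZ}{k} whose underlying directed graph $G'$ has the same vertex set as $G$, and contains the arcs $v\to w$ and $w\to v$ for every edge $v\sim w$ in $G$. Preserve all heights, token positions, and the choice of active player. This construction is clearly polynomial (in fact linear) in the size of the input, since $\ab{G'}\leq 2\ab{G}$.

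Next I would verify that the legal-move sets coincide in the two games, so that a play of one game corresponds step-for-step to a play of the other. For any vertex $v$, the set of successors of $v$ in $G'$ is exactly the neighborhood of $v$ in $G$, so the available movement options for whichever token is being moved are identical. Under restricted deletion the deleted vertex is in both cases simply the vertex the token moved from, and the height-reduction rule is the same. Under free deletion, one is allowed to decrement any vertex from which the just-made move could have been made, i.e.\ any neighbor of the destination in $G$ and any predecessor of the destination in $G'$; since $G'$ was built symmetrically, predecessors of a vertex in $G'$ and neighbors of that vertex in $G$ coincide. The stacking height restriction, when $k\geq 3$, compares the height of the source to that of the destination and is unaffected by the reformulation. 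Finally, for partizan games the rule that the two tokens may not coincide is a constraint on vertex identities, not on the edge structure, so it carries over verbatim.

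Consequently every move in the \sgame{UYZ}{k} position is available in the \sgame{DYZ}{k} position and vice versa, so the two game trees are isomorphic. In particular they have the same winner under optimal play, which is exactly what is needed for $\text{\sgame{UYZ}{k}}\leq_p \text{\sgame{DYZ}{k}}$. There is no real obstacle here; the only point that requires a moment of care is checking that the free-deletion rule translates correctly, which it does precisely because the construction of $G'$ is symmetric and so the predecessor sets in $G'$ match the neighborhoods in $G$.
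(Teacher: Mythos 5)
Your construction is exactly the one the paper uses: replace each undirected edge with a pair of oppositely-oriented arcs, keeping heights, tokens, and active player fixed. The paper states this in one sentence without further verification; your additional checks (that successor sets, free-deletion predecessor sets, the stacking restriction, and the partizan token-disjointness rule all carry over) are correct and simply make explicit what the paper leaves implicit.
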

\begin{proof}
To convert an instance of \sgame{UYZ}{k} to an equivalent instance of \sgame{DYZ}{k}, replace each edge with a pair of arcs, one going each direction.
\end{proof}

\begin{prop}\label{prop:1vs2}
The complexities of the games \sgame{XYZ}{1} and \sgame{XYZ}{2} are the same, and they are still the same (as each other) when restricted to bipartite graphs.
\end{prop}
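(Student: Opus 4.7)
The plan is to combine Proposition~\ref{prop:redtriv} for the easy direction \sgame{XYZ}{1} $\leq_p$ \sgame{XYZ}{2} with a vertex-splitting construction for the nontrivial direction \sgame{XYZ}{2} $\leq_p$ \sgame{XYZ}{1}. Given an instance of \sgame{XYZ}{2} on a graph $G$, build $G'$ by replacing each height-2 vertex $v$ with two height-1 vertices $v^{(1)}$ and $v^{(2)}$, each inheriting the full in-neighborhood and out-neighborhood of $v$ in $G$, with no edge or arc placed between them. Original height-1 vertices are carried over unchanged, and a token at a height-2 vertex $v$ is placed on $v^{(1)}$. The construction runs in linear time, at most doubles the vertex count, and preserves bipartiteness, since both copies can be placed in the part containing $v$ and they share no edge.

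A preliminary observation is that the height move restriction is vacuous for $k \leq 2$: a move from a height-2 vertex requires its target to have height at least $1$, which every vertex satisfies, so no move of \sgame{XYZ}{2} is artificially blocked. To show the winners agree, I set up the correspondence between positions in which the height of $v$ in the original equals the number of surviving copies among $\{v^{(1)}, v^{(2)}\}$ in the new game, and a token is at $v$ iff it is at some $v^{(i)}$. For \emph{impartial} games with restricted deletion, every move in the new projects to a legal move in the original and every move in the original can be realized in the new by choosing a copy; the swap $v^{(1)} \leftrightarrow v^{(2)}$ is an automorphism of $G'$ fixing everything else, so the choice of copy does not affect game value. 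Under free deletion, the new game additionally allows deletion of either copy of $v$ when moving to a common neighbor, but both choices correspond to the same original move and yield game-equivalent positions by the same symmetry. A straightforward induction on the total remaining height then shows that the winners match.

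The main obstacle is the partizan case, where the naive construction fails: if Left's token sits at a neighbor $u$ of $v$ and Right's token is at $v^{(1)}$, then Left can move to the ghost copy $v^{(2)}$ in the new game, a move with no counterpart in the original, since there Left cannot move onto $v$ while Right occupies it. Addressing this requires a more careful partizan gadget, for example inserting a small intermediary structure between the two copies or adding guard vertices that encode which copy is currently \emph{live}, so that the reachability of $v^{(2)}$ mirrors the availability of $v$'s second life in the original. The delicate piece of the proof is exhibiting such a gadget and verifying that the move correspondence, and hence the winner, is preserved in both directions; once that is done, the induction of the impartial case carries over with only notational changes, yielding the bipartite refinement at no additional cost since each gadget can be drawn inside the bipartition class of $v$.
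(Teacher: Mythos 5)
Your construction coincides with the paper's: the paper replaces a height-$2$ vertex $v$ with two height-$1$ vertices $v_1,v_2$ inheriting all of $v$'s adjacencies (with no edge between the copies), exactly as you do, the only difference being labeling and which copy initially carries the token. For the impartial variants your inductive argument is sound and matches the paper's in spirit.

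The partizan concern you raise, however, is genuine, and the paper's own proof does not handle it. The paper's induction considers ``an optimal move \ldots\ for the active player in $K$,'' maps it forward, and then concludes. This shows only that a winning \emph{active} player in $K$ has a good move in $\T{K}$; it never checks the active player's \emph{extra} moves in $\T{K}$ that have no preimage in $K$, nor the degenerate case where the active player has no move at all in $K$. Your ``ghost copy'' scenario realizes exactly this failure, and it is easy to make it concrete. Let $G$ consist of a single edge $u\sim v$ with $u$ of height $1$ and $v$ of height $2$; put Left's token at $u$, Right's token at $v$, Left to move. In \sgame{UPR}{2}, Left's only neighbor is occupied, so Left has no move and loses. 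In the transformed \sgame{UPR}{1} instance (vertices $u_1,v_1,v_2$ with $u_1\sim v_1$ and $u_1\sim v_2$, Left at $u_1$, Right at $v_2$), Left moves to $v_1$, deleting $u_1$, and Right is stranded with no neighbor. The winners differ, so the reduction as written does not preserve the outcome in the partizan case. That said, your proposal is incomplete: you correctly diagnose that a gadget is needed to block moves onto an occupied vertex's twin, but you do not exhibit one, so the partizan half of the claim remains unproved. (The paper invokes Proposition~\ref{prop:1vs2} only in the impartial setting, for \sgame{UIR}{2}, so its downstream results do not rely on the partizan case, but the proposition's stated generality is not supported by the paper's proof any more than by yours.)
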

\begin{proof}
By Proposition~\ref{prop:redtriv}, \sgame{XYZ}{2} is no less complex than \sgame{XYZ}{1}. We now give a polynomial-time reduction from \sgame{XYZ}{2} to \sgame{XYZ}{1}, thereby also showing that \sgame{XYZ}{2} is no more complex than \sgame{XYZ}{1}. Additionally, under our reduction the underlying graph in one instance is bipartite if and only if the other is. Given an instance $K$ of \sgame{XYZ}{2} on a graph $G$, create an instance $\T{K}$ of \sgame{XYZ}{1} with underlying graph $G'$ as follows:
\begin{itemize}
\item For a vertex $v\in G$, $G'$ has a vertex $v_1$ if $v$ has height $1$ in $K$, and $G'$ has vertices $v_1$ and $v_2$ if $v$ has height $2$ in $K$.
\item If $G$ is undirected, if $v\sim w$ in $G$, then $v_1\sim w_1$ in $G'$. Furthermore, $v_1\sim w_2$, $v_2\sim w_1$, and $v_2\sim w_2$ whenever both vertices involved exist. Analogously, if $G$ is directed, if $v\to w$ in $G$, then $v_1\to w_1$ in $G'$. Furthermore, $v_1\to w_2$, $v_2\to w_1$, and $v_2\to w_2$ whenever both vertices involved exist.
\item If vertex $v$ of height $1$ has a token on it in $K$, then $v_1$ has that token on it in $\T{K}$.
\item If vertex $v$ of height $2$ has a token on it in $K$, then $v_2$ has that token on it in $\T{K}$.
\end{itemize}
See Figure~\ref{fig:2to1} for an example of this transformation. This example is directed and partizan, but an analogous transformation works if the game is undirected and/or impartial.

\begin{figure}
\begin{center}
\begin{tikzpicture}[thick, node distance=1.5cm, circ/.style={draw, circle, minimum size=25pt}]

\node[circ] at (0, 0) (0) {$2$};
\node[circ] at (2, 0)  (1) {$2L$};
\node[circ] at (4, 0)  (2) {$1$};
\node[circ] at (6, 0) (3)  {$1R$};
\node[circ] at (8, 0) (4)  {$2$};

\draw [->] (0)->(1);
\draw [->] (1)->(2);
\draw [->] (2)->(3);
\draw [->] (3)->(4);
\end{tikzpicture}\\\lb

{\Huge $\Downarrow$}\\

\begin{tikzpicture}[thick, node distance=1.5cm, circ/.style={draw, circle, minimum size=25pt}]

\node[circ] at (4, 0) (2) {$$};
\node[circ] at (6, 0)  (3) {$R$};
\node[circ] at (8, 1)  (4a) {$$};
\node[circ] at (8, -1)  (4b) {$$};
\node[circ] at (2, 1) (1a) {$$};
\node[circ] at (2, -1) (1b)  {$L$};
\node[circ] at (0, 1) (0a) {$$};
\node[circ] at (0, -1) (0b)  {$$};

\draw [->] (0a)->(1a);
\draw [->] (0a)->(1b);
\draw [->] (0b)->(1a);
\draw [->] (0b)->(1b);
\draw [->] (1a)->(2);
\draw [->] (1b)->(2);
\draw [->] (2)->(3);
\draw [->] (3)->(4a);
\draw [->] (3)->(4b);
\end{tikzpicture}
\end{center}
\caption{Reduction from \sgame{DPZ}{2} to \sgame{DPZ}{1}.}
\label{fig:2to1}
\end{figure}
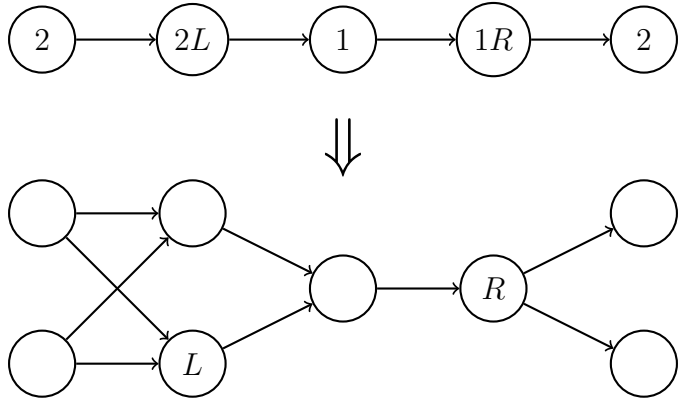

We claim that Left wins $\T{K}$ if and only if she wins $K$. We argue by induction on the total height of all vertices in $K$. As a base case, note that if all vertices in $K$ have height $1$, then $K=\T{K}$ (aside from vertex subscripts in $\T{K}$), so they have the same winner. Now, suppose all games $\bar{K}$ with smaller total height than $K$ have the property that the same player wins $\bar{K}$ and $\T{\bar{K}}$. Now, consider an optimal move $\pb{v,w,u}$ for the active player in $K$. Denote by $K_1$ the resulting instance of \sgame{XYZ}{2}. In $\T{K}$, any move of the form $\pb{v_i,w_j,u_\ell}$ is legal, provided all vertices exist and $i$ is as large as possible (so that there is actually a token there to be moved). Choose the move where $i$, $j$, and $\ell$ are all as large as possible. The result is an instance $K_1'$ of \sgame{XYZ}{1}. Note that $K_1'=\T{K_1}$, so $k_1$ and $K_1'$ have the same winner by induction. Hence, $K$ and $\T{K}$ have the same winner, as required.

Finally, we claim that the $G$ is bipartite if and only if $G'$ is bipartite. First, if $v,w,\ldots, z$ is an odd cycle in $G$, then $v_1,w_1,\ldots,z_1$ is an odd cycle in $G'$. On the other hand, if $\pb{A,B}$ is a bipartition of $G$, then $\pb{\st{v_i:v\in A},\st{w_i:w\in B}}$ is a bipartition of $G'$.
\end{proof}

\section{Directed Geography with Free Deletion}\label{s:dif}\label{s:dp}
In this section, we add free deletion to both \game{Geography} and \game{Partizan Geography}. We obtain that these games, \game{DIF} and \game{DPF}, are both \ps-complete, even when restricted to bipartite graphs.

\subsection{Complexity of DIF}\label{ss:dif}

We start by proving that \game{DIF} is \ps-complete on bipartite graphs. Our reduction is a variation of Lichtenstein and Sipser's classical reduction from \game{TQBF} to \game{Geography}, along with their alteration to make the resulting graph bipartite~\cite{geogreduct}. Those authors are also able to place degree restrictions on the vertices in the resulting graph, but their method for in-degree restriction does not work under free deletion.
\begin{theorem}\label{thm:dif}
The game \game{DIF} is \ps-complete, even when restricted to bipartite graphs.
\end{theorem}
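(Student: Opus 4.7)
The plan is to reduce \game{TQBF} to \game{DIF} on bipartite graphs, adapting Lichtenstein and Sipser's classical reduction to \game{Geography}~\cite{geogreduct} for free deletion. Given an instance $\exists x_1\forall x_2\cdots\forall x_n P\pb{x_1,\ldots,x_n}$ with $P$ in 3-CNF having clauses $C_1,\ldots,C_m$, I would construct a bipartite directed graph $G$ with a designated starting vertex so that Left wins the resulting \game{DIF} game if and only if the formula is true.

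The construction has two phases. The variable phase is a chain of $n$ ``diamond'' gadgets, one per variable $x_i$: a choice vertex $u_i$ has arcs into a true branch and a false branch that rejoin at $u_{i+1}$, or feed into the clause phase when $i=n$. Subdivisions are inserted along both branches so that the path parities leave Left active at $u_i$ for odd $i$ (existential) and Right active for even $i$ (universal), and so that the graph is bipartite. The clause phase begins at a vertex $c$ where Right is active. Right chooses a clause $c_j$, Left chooses a literal $\ell$ of $C_j$, and the single outgoing arc from $\ell$ points to a check vertex that is alive iff the literal is false under the variable assignment. Thus Right is stuck on his next move exactly when Left picks a true literal, and Left wins the game iff no matter which clause Right selects she can find a true literal in it, iff the formula is true.

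The main obstacle is free deletion. When a mover goes from $v$ to $w$, they may delete any predecessor of $w$, not just $v$. In the naive Lichtenstein-Sipser construction, the check vertex for literal $x_i$ has $u_i$ together with every literal-$x_i$ occurrence as predecessors, so an adversarial mover during the variable phase could delete a literal vertex out of turn and remove one of Left's clause-phase options; similarly, at the merge of the true and false branches, a mover could delete the opposite branch's last vertex and flip the effective truth value of the opposite literal. I would address these attacks by (i) replicating each literal occurrence into enough parallel clones that no single variable-phase move can destroy them all, and (ii) inserting additional intermediate vertices on the branches so that the alternative predecessors at each chain move are either strategically neutral or already-dead dummy vertices. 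The core technical labor is then a case-by-case verification that, under these modifications, the game outcomes under free and restricted deletion coincide, so the Lichtenstein-Sipser correctness argument still applies. Bipartiteness is maintained throughout by choosing the subdivision lengths consistently with a two-coloring of the resulting graph.
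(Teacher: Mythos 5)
Your high-level plan (reduce from \game{TQBF}, adapt Lichtenstein--Sipser) matches the paper's, and your worry list correctly includes out-of-turn deletion of literal vertices and cross-branch deletion at the merge. But the central framing has a gap that your proposed patches do not repair. You treat truth as being encoded by \emph{which branch the token traverses} and then try to legislate free deletion back into restricted deletion by cloning and by dummy vertices. Under free deletion that framing is unstable at exactly the point you identify: when the token reaches the merge vertex $u_{i+1}$, the mover may delete the last vertex of \emph{either} branch, including the one the token did not traverse. There is no way to render both predecessors of the merge into harmless dummies, because at least one of them must be the vertex that a clause-side literal path points to in order to test the assignment. Hence, in your construction, the path traversed does \emph{not} determine which branch vertices are alive for the clause phase; the deletion choice at the merge does. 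Your proposal does not say who makes that deletion or how its parity is controlled, and it does not acknowledge that this deletion, rather than the path, carries the semantic content.

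The paper resolves this by inverting the role of free deletion rather than trying to neutralize it: the truth value of $x_i$ is defined to be \emph{which predecessor of the joining vertex gets deleted}, and the gadget shapes are chosen (a hexagon for odd $i$, a diamond-plus-pendant for even $i$) precisely so that the move into the joining vertex is made by Left for existential variables and by Right for universal ones, while keeping the graph bipartite. The clause-side literal vertices are duplicated, which is the correct version of your cloning instinct, to make any deletion of a single literal vertex during the variable phase strategically neutral. If you reorganize your argument around ``the deletion at the join encodes truth'' and work out the parities, you will essentially recover the paper's construction; as written, however, the claim that ``the game outcomes under free and restricted deletion coincide'' is not established and, as you have set things up, it is not true.
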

\begin{proof}
We proceed via a reduction from \game{TQBF} to \game{DIF}. Let $Q=\exists x_1\forall x_2\cdots\exists x_{n-1}\forall x_n c_1\wedge c_2\wedge\cdots\wedge c_m$ be an instance of \game{TQBF}. We construct an instance of \game{DIF} as follows:
\begin{itemize}
\item For each variable $x_i$, there is a structure called a \emph{variable gadget}. If $i$ is even, the variable gadget is a diamond with a pendant. The top vertex precedes each of the left and right vertices, each of which precedes a vertex we call the \emph{joining vertex}, which precedes the pendant (bottom) vertex. If $i$ is odd, the variable gadget is a hexagon, with a left path and a right path each with two vertices, a top vertex preceding each of the left and right paths, and a joining (bottom) vertex succeeding each of the left and right paths.
See Figure~\ref{fig:difvar} for depictions of both types of variable gadget.
\item The token begins at the top vertex of the $x_1$ gadget.
\item For each clause $c_j$, there is a structure called a \emph{clause gadget}. It contains a vertex for the clause and two vertices for each literal in the clause. If a literal corresponds to an odd-indexed variable or its negation, the literal vertex 
succeeds the clause vertex. If a literal corresponds to an even-indexed variable or its negation, the literal vertex 
succeeds an extra vertex, which succeeds the clause vertex. 
(See Figure~\ref{fig:difclause}.)
\item For each $i>1$, the bottom vertex of the $x_i$ gadget precedes the top vertex of the $x_{i+1}$ gadget.
\item The bottom vertex of the $x_n$ gadget precedes each clause vertex.
\item A vertex in a clause gadget corresponding to a literal $x_i$ for odd $i$ precedes the second vertex in the left path in the $x_i$ gadget. A vertex in a clause gadget corresponding to a literal $\bar{x}_i$ for odd $i$ precedes the second vertex in the right path in the $x_i$ gadget. 
\item A vertex in a clause gadget corresponding to a literal $x_i$ for even $i$ precedes the \emph{right} vertex in the $x_i$ gadget. A vertex in a clause gadget corresponding to a literal $\bar{x}_i$ for even $i$ precedes the \emph{left} vertex in the $x_i$ gadget.
\end{itemize}
See Figure~\ref{fig:dif} for an example of this construction.

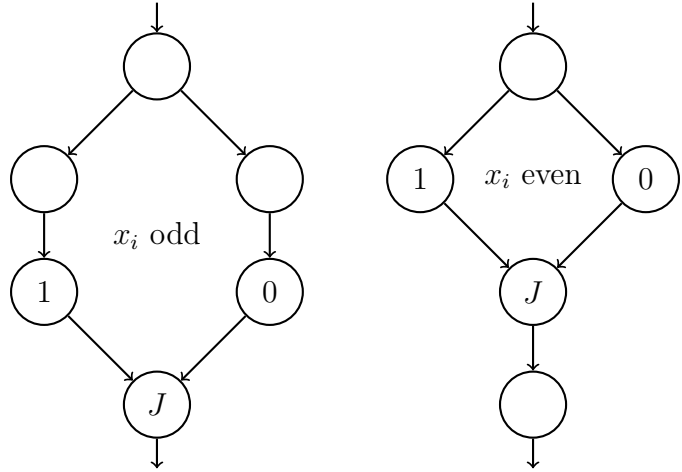
\begin{figure}
\begin{center}
\begin{tikzpicture}[thick, node distance=1.5cm, circ/.style={draw, circle, minimum size=25pt}]

\node[circ] at (0, 0) (top) {};
\node at (0, -2.25cm) (var) {$x_i$ odd};
\node[circ] at (-1.5cm, -1.5cm) (tabove) {};
\node[circ] at (1.5cm, -1.5cm) (fabove) {};
\node[circ] at (-1.5cm, -3cm) (true) {$1$};
\node[circ] at (1.5cm, -3cm) (false) {$0$};
\node[circ] at (0, -4.5cm) (bottom) {$J$};
\node (in) at (0, 1cm) {};
\node (out) at (0, -5.5cm) {};

\draw [->] (top)->(tabove);
\draw [->] (tabove)->(true);
\draw [->] (true)->(bottom);
\draw [->] (top)->(fabove);
\draw [->] (fabove)->(false);
\draw [->] (false)->(bottom);
\draw [->] (in)->(top);
\draw [->] (bottom)->(out);

\node[circ] at (5cm, 0) (tope) {};
\node at (5cm, -1.5cm) (vare) {$x_i$ even};
\node[circ] at (3.5cm, -1.5cm) (truee) {$1$};
\node[circ] at (6.5cm, -1.5cm) (falsee) {$0$};
\node[circ] at (5cm, -3cm) (joine) {$J$};
\node[circ] at (5cm, -4.5cm) (bottome) {};
\node (ine) at (5cm, 1cm) {};
\node (oute) at (5cm, -5.5cm) {};

\draw [->] (tope)->(truee);
\draw [->] (truee)->(joine);
\draw [->] (tope)->(falsee);
\draw [->] (falsee)->(joine);
\draw [->] (joine)->(bottome);
\draw [->] (ine)->(tope);
\draw [->] (bottome)->(oute);
\end{tikzpicture}
\end{center}
\caption{Variable gadgets for \game{DIF}. Symbols $1$ and $0$ denote vertices corresponding to the variable $x_i$ being set to {\ttt} or \fff, respectively. Joining vertices are indicated by the symbol $J$.}
\label{fig:difvar}
\end{figure}

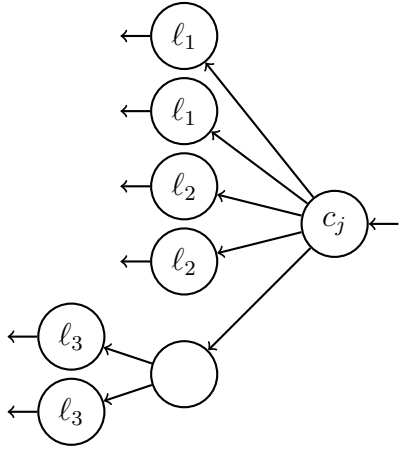
\begin{figure}
\begin{center}
\begin{tikzpicture}[thick, node distance=1.5cm, circ/.style={draw, circle, minimum size=25pt}]

\node at (1cm,0) (in) {};
\node[circ] at (0,0) (clause) {$c_j$};
\node[circ] at (-2cm, 2.5cm) (l1a) {$\ell_1$};
\node[circ] at (-2cm, 1.5cm) (l1b) {$\ell_1$};
\node[circ] at (-2cm, 0.5cm) (l2a) {$\ell_2$};
\node[circ] at (-2cm, -0.5cm) (l2b) {$\ell_2$};
\node[circ] at (-2cm, -2cm) (l3) {};
\node[circ] at (-3.5cm, -1.5cm) (l3ae) {$\ell_3$};
\node[circ] at (-3.5cm, -2.5cm) (l3be) {$\ell_3$};
\node at (-3cm,2.5cm) (out1a) {};
\node at (-3cm,1.5cm) (out1b) {};
\node at (-3cm,0.5cm) (out2a) {};
\node at (-3cm,-0.5cm) (out2b) {};
\node at (-4.5cm,-1.5cm) (out3a) {};
\node at (-4.5cm,-2.5cm) (out3b) {};

\draw [->] (in) -> (clause);
\draw [->] (clause) -> (l1a);
\draw [->] (clause) -> (l1b);
\draw [->] (clause) -> (l2a);
\draw [->] (clause) -> (l2b);
\draw [->] (clause) -> (l3);
\draw [->] (l1a) -> (out1a);
\draw [->] (l1b) -> (out1b);
\draw [->] (l2a) -> (out2a);
\draw [->] (l2b) -> (out2b);
\draw [->] (l3) -> (l3ae);
\draw [->] (l3) -> (l3be);
\draw [->] (l3ae) -> (out3a);
\draw [->] (l3be) -> (out3b);
\end{tikzpicture}
\end{center}
\caption{Clause gadget for \game{DIF}, with clause $c_j=\ell_1\vee\ell_2\vee\ell_3$, if the variable in literal $\ell_3$ has even index and the others have odd index.}
\label{fig:difclause}
\end{figure}

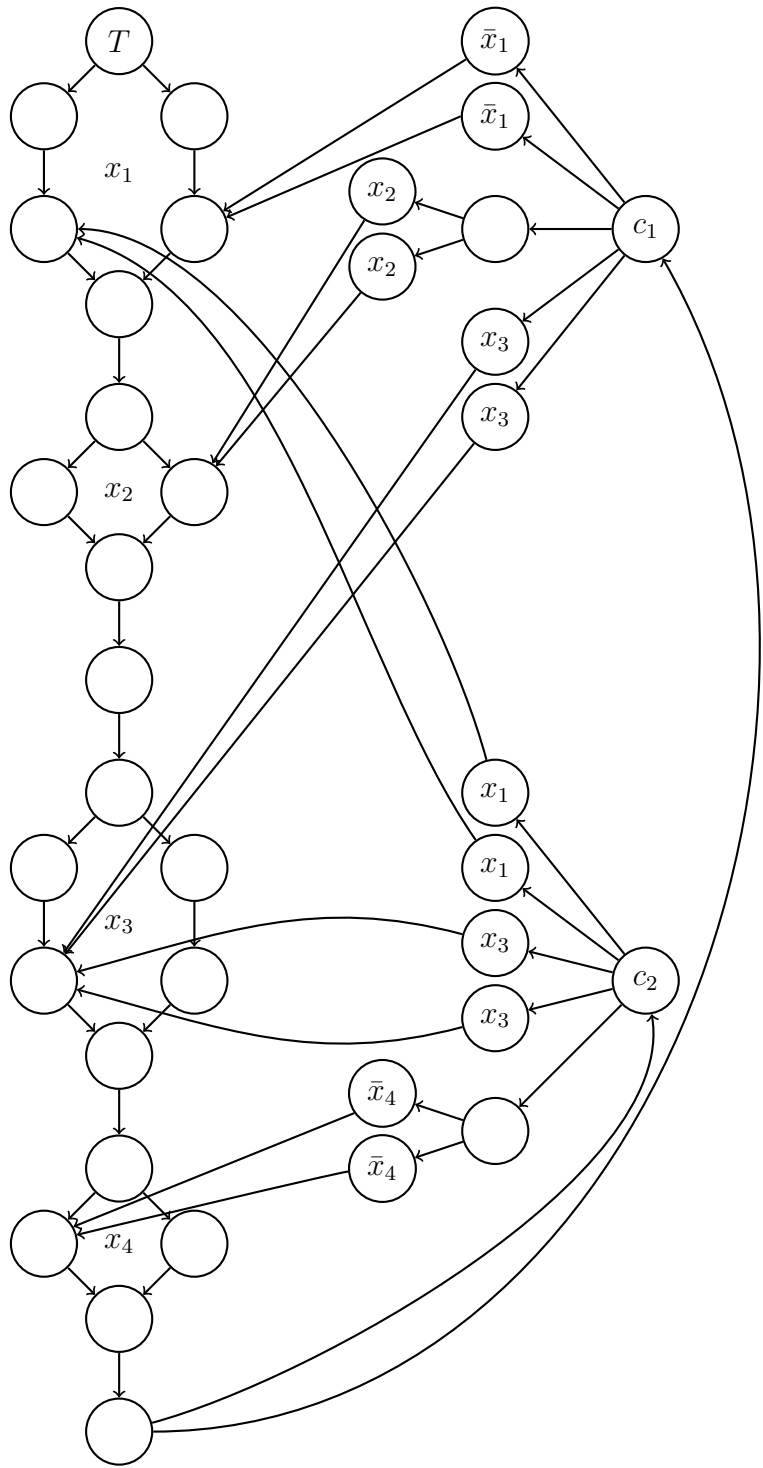
\begin{figure}
\begin{center}
\begin{tikzpicture}[thick, node distance=1.5cm, circ/.style={draw, circle, minimum size=25pt}]

\node[circ] at (0.000000cm, 0.000000cm) (x1top) {$T$};
\node[circ] at (-1.000000cm, -1.000000cm) (x1truep) {};
\node[circ] at (1.000000cm, -1.000000cm) (x1falsep) {};
\node[circ] at (-1.000000cm, -2.500000cm) (x1true) {};
\node[circ] at (1.000000cm, -2.500000cm) (x1false) {};
\node[circ] at (0.000000cm, -3.500000cm) (x1bottom) {};
\node at (0.000000cm, -1.750000cm) (x1var) {$x_{1}$};
\node[circ] at (0.000000cm, -5.000000cm) (x2top) {};
\node[circ] at (-1.000000cm, -6.000000cm) (x2true) {};
\node[circ] at (1.000000cm, -6.000000cm) (x2false) {};
\node[circ] at (0.000000cm, -7.000000cm) (x2join) {};
\node[circ] at (0.000000cm, -8.500000cm) (x2bottom) {};
\node at (0.000000cm, -6.000000cm) (x2var) {$x_{2}$};
\node[circ] at (0.000000cm, -10.000000cm) (x3top) {};
\node[circ] at (-1.000000cm, -11.000000cm) (x3truep) {};
\node[circ] at (1.000000cm, -11.000000cm) (x3falsep) {};
\node[circ] at (-1.000000cm, -12.500000cm) (x3true) {};
\node[circ] at (1.000000cm, -12.500000cm) (x3false) {};
\node[circ] at (0.000000cm, -13.500000cm) (x3bottom) {};
\node at (0.000000cm, -11.750000cm) (x3var) {$x_{3}$};
\node[circ] at (0.000000cm, -15.000000cm) (x4top) {};
\node[circ] at (-1.000000cm, -16.000000cm) (x4true) {};
\node[circ] at (1.000000cm, -16.000000cm) (x4false) {};
\node[circ] at (0.000000cm, -17.000000cm) (x4join) {};
\node[circ] at (0.000000cm, -18.500000cm) (x4bottom) {};
\node at (0.000000cm, -16.000000cm) (x4var) {$x_{4}$};
\node[circ] at (7.000000cm, -2.500000cm) (c1) {$c_{1}$};
\node[circ] at (5.000000cm, 0.000000cm) (c11a) {$\bar{x}_{1}$};
\node[circ] at (5.000000cm, -1.000000cm) (c11b) {$\bar{x}_{1}$};
\node[circ] at (5.000000cm, -2.500000cm) (c12a) {};
\node[circ] at (3.500000cm, -2.000000cm) (c12ae) {$x_{2}$};
\node[circ] at (3.500000cm, -3.000000cm) (c12be) {$x_{2}$};
\node[circ] at (5.000000cm, -4.000000cm) (c13a) {$x_{3}$};
\node[circ] at (5.000000cm, -5.000000cm) (c13b) {$x_{3}$};
\node[circ] at (7.000000cm, -12.500000cm) (c2) {$c_{2}$};
\node[circ] at (5.000000cm, -10.000000cm) (c21a) {$x_{1}$};
\node[circ] at (5.000000cm, -11.000000cm) (c21b) {$x_{1}$};
\node[circ] at (5.000000cm, -12.000000cm) (c22a) {$x_{3}$};
\node[circ] at (5.000000cm, -13.000000cm) (c22b) {$x_{3}$};
\node[circ] at (5.000000cm, -14.500000cm) (c23a) {};
\node[circ] at (3.500000cm, -14.000000cm) (c23ae) {$\bar{x}_{4}$};
\node[circ] at (3.500000cm, -15.000000cm) (c23be) {$\bar{x}_{4}$};

\draw [->] (x1top) -> (x1truep);
\draw [->] (x1top) -> (x1falsep);
\draw [->] (x1truep) -> (x1true);
\draw [->] (x1falsep) -> (x1false);
\draw [->] (x1true) -> (x1bottom);
\draw [->] (x1false) -> (x1bottom);
\draw [->] (x2top) -> (x2true);
\draw [->] (x2top) -> (x2false);
\draw [->] (x2true) -> (x2join);
\draw [->] (x2false) -> (x2join);
\draw [->] (x2join) -> (x2bottom);
\draw [->] (x3top) -> (x3truep);
\draw [->] (x3top) -> (x3falsep);
\draw [->] (x3truep) -> (x3true);
\draw [->] (x3falsep) -> (x3false);
\draw [->] (x3true) -> (x3bottom);
\draw [->] (x3false) -> (x3bottom);
\draw [->] (x4top) -> (x4true);
\draw [->] (x4top) -> (x4false);
\draw [->] (x4true) -> (x4join);
\draw [->] (x4false) -> (x4join);
\draw [->] (x4join) -> (x4bottom);
\draw [->] (x1bottom) -> (x2top);
\draw [->] (x2bottom) -> (x3top);
\draw [->] (x3bottom) -> (x4top);
\draw [->] (x4bottom) to [out=0, in=300] (c1);
\draw [->] (c1) -> (c11a);
\draw [->] (c1) -> (c11b);
\draw [->] (c11a) -> (x1false);
\draw [->] (c11b) to [out=180, in=20, looseness=0] (x1false);
\draw [->] (c1) -> (c12a);
\draw [->] (c12a) -> (c12ae);
\draw [->] (c12a) -> (c12be);
\draw [->] (c12ae) -> (x2false);
\draw [->] (c12be) -> (x2false);
\draw [->] (c1) -> (c13a);
\draw [->] (c1) -> (c13b);
\draw [->] (c13a) -> (x3true);
\draw [->] (c13b) -> (x3true);
\draw [->] (x4bottom) to [out=15, in=280, looseness=0.65] (c2);
\draw [->] (c2) -> (c21a);
\draw [->] (c2) -> (c21b);
\draw [->] (c21a) to [out=105, in=0, looseness=0.65] (x1true);
\draw [->] (c21b) to [out=125, in=345, looseness=0.75] (x1true);
\draw [->] (c2) -> (c22a);
\draw [->] (c2) -> (c22b);
\draw [->] (c22a) to [out=165, in=15, looseness=1] (x3true);
\draw [->] (c22b) to [out=195, in=-15, looseness=1] (x3true);
\draw [->] (c2) -> (c23a);
\draw [->] (c23a) -> (c23ae);
\draw [->] (c23a) -> (c23be);
\draw [->] (c23ae) to [out=215, in=30, looseness=0] (x4true);
\draw [->] (c23be) to [out=185, in=15, looseness=0.1] (x4true);
\end{tikzpicture}
\end{center}
\caption{\game{DIF} reduction from \game{TQBF} instance $\exists x_1\forall x_2\exists x_3\forall x_4\pb{\bar{x}_1\vee x_2\vee x_3}\wedge\pb{x_1\vee x_3\vee\bar{x}_4}$}
\label{fig:dif}
\end{figure}

Let us analyze how the game progresses. First, the token moves through the variable gadgets, then the token moves through a single clause gadget. After the token reaches a literal vertex in a clause gadget, there is at most one more move in the game.

First, let us examine the play options through the variable gadgets. When the token is at the top vertex of a variable gadget, it is Left's turn. 
Several move/deletion options exist as the token traverses the variable gadget, but we shall observe that only one decision affects the outcome: which vertex is deleted when the token is moved to the joining vertex. This decision is key, and it is important to note that it is made by Left for odd-indexed variables and by Right for even-indexed variables. After the bottom of the variable gadget is reached, if more variables remain Right has a forced regular move from the bottom of the current gadget to the top of the next gadget. Observe that along the way, every joining vertex is deleted.

After all of the variable gadgets have been traversed, the token is at the bottom vertex of the $x_n$ gadget with Right to move. He can move regularly to any clause vertex. Left then can select a literal in that clause and move regularly to a vertex for it if it has odd index or to the extra vertex for it if it has even index. Note that at least one instance of each literal vertex is available, as any pair of duplicate literal vertices share a single successor. 
It is now Right's turn. Suppose first that the chosen literal contains an odd-indexed variable. If the successor of the current vertex has been deleted, the game is over and Left wins. If not, Right moves to that vertex and deletes any of its predecessors. Left then loses, as this vertex is the last vertex in a left or right path in an odd-indexed variable gadget, and all joining vertices (including this vertex's only successor) have already been deleted. On the other hand, suppose that the chosen literal contains  an even-indexed variable. Right is now forced to move regularly to a vertex for that literal, and it becomes Left's turn. If the successor of the current vertex has been deleted, the game is over and Right wins. If not, Left moves to that vertex and deletes any of its predecessors. Right then loses, as this vertex is the last vertex in a left or right path in an even-indexed variable gadget, and all joining vertices (including this vertex's only successor) have already been deleted.

We now claim that $Q$ is true if and only if Left wins the corresponding game of \game{DIF}. First, suppose $Q$ is true. Then, there is some setting of the variables $x_1,x_3,\ldots,x_{n-1}$ so that the formula is true regardless of the values of $x_2,x_4,\ldots,x_n$. Take such a setting, and, as Left, adopt the following strategy:
\begin{itemize}
\item When moving to the joining vertex of the gadget for $x_i$, delete the left predecessor if $x_i$ should be set to \ttt, and delete the right predecessor if $x_i$ should be set to \fff.
\item When moving from a clause vertex, choose a successor corresponding to a true literal.
\end{itemize}
Since $Q$ is true, every clause is true, which means that each clause contains a true literal. Right then loses, as the successor of the chosen literal vertex has been deleted earlier in the game if it contains an odd-indexed variable, or the successor of the successor of the literal vertex that Right moves to has not been deleted earlier in the game if it contains an even-indexed variable.

Now, suppose $Q$ is false. Then, there is some setting of the variables $x_2,x_4,\ldots,x_n$ so that the formula is false regardless of the values of $x_1,x_3,\ldots,x_{n-1}$. Take such a setting, and, as Right, adopt the following strategy:
\begin{itemize}
\item When moving to the joining vertex of the gadget for $x_i$, delete the left predecessor if $x_i$ should be set to \ttt, and delete the right predecessor if $x_i$ should be set to \fff.
\item When moving to a clause vertex, choose a clause vertex corresponding to a false clause.
\end{itemize}
Since $Q$ is false, there is a false clause, which, as a disjunction of literals, contains no true literal. Right then wins, as the successor of the literal vertex that Left moves to is still present in the graph if it contains an odd-indexed variable, or the successor of the literal vertex that Right moves to is no longer present in the graph if it contains an even-indexed variable.

This completes the reduction showing that $\text{\game{TQBF}}\leq_p DIF$. Therefore, \game{DIF} is \ps-hard, and, hence, \ps-complete, as required.

What remains is to prove that the resulting graph is bipartite (ignoring direction). We observe that the following describes a bipartition $\pb{A,B}$ of the graph.
\begin{description}
\item[$A$:]\lb
\begin{itemize}
\item Top vertices of variable gadgets.
\item Second vertices in left and right paths of variable gadgets corresponding to odd-indexed variables.
\item Joining vertices in variable gadgets corresponding to even-indexed variables.
\item Clause vertices.
\item Literal vertices for even-indexed variables.
\end{itemize}
\item[$B$:]\lb
\begin{itemize}
\item First vertices in left and right paths of variable gadgets corresponding to odd-indexed variables.
\item Left and right vertices of variable gadgets corresponding to even-indexed variables.
\item Bottom vertices of variable gadgets.
\item Literal vertices for odd-indexed variables.
\item Extra vertices in clause gadgets.
\end{itemize}
\end{description}
\end{proof}

\subsection{Complexity of DPF}\label{ss:dpf}

In Section~\ref{ss:upf}, we prove that the game \game{UPF} is \ps-complete. By Proposition~\ref{prop:udtriv}, this implies that \game{DPF} is \ps-complete. But, we include a separate proof here, as this reduction is simpler and yields an underlying bipartite graph.
\begin{theorem}\label{thm:dpf}
The game \game{DPF} is \ps-complete, even when restricted to bipartite graphs.
\end{theorem}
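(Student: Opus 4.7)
My plan is to reduce \game{TQBF} to \game{DPF} directly, adapting the construction used for Theorem~\ref{thm:dif} to the partizan two-token setting. Roughly, Left's token will play through a ``left track'' containing variable gadgets for the existentially quantified (odd-indexed) variables, while Right's token plays through a parallel ``right track'' of variable gadgets for the universally quantified (even-indexed) variables. After the variable phase, both tracks feed into a shared clause phase modeled on the \game{DIF} clause gadgets.

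Concretely, I would design the partizan variable gadget so that, each time the deciding player's token reaches a ``joining'' vertex, that player uses the free-deletion rule to delete one of two distinguished predecessors $T_i$ or $F_i$, encoding the truth value of $x_i$ exactly as in the \game{DIF} construction. Short padding chains inserted between successive variable gadgets would ensure that Left and Right alternate turns correctly, with each deciding move landing on the owning token. The clause phase at the end would again have Right forced to choose a clause vertex and Left forced to choose a literal; the chosen literal's successor is a vertex in the appropriate track whose presence or absence (depending on the earlier variable deletions) determines who is stuck on the following move, exactly as in the analysis for Theorem~\ref{thm:dif}. The correctness argument would then mirror the \game{DIF} proof, with both players using essentially the same strategies transposed to their own tracks.

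The main obstacle, I expect, is synchronizing the two tokens: they must alternate in the correct order so that the deciding move for variable $x_i$ lands on the owning player's token, they must never share a vertex (which is forbidden in partizan Geography), and neither player may use free deletion on the other player's track to ``steal'' a variable assignment. This likely forces the two tracks to be vertex-disjoint except at the clause phase, where a careful arrangement lets both tokens participate in choosing a clause and a literal without colliding. Once the construction is in place, bipartiteness would follow from an explicit two-coloring analogous to the one exhibited at the end of the \game{DIF} proof, obtained by routine inspection of the gadgets.
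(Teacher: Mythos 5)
Your high-level plan aligns with the paper's approach in one important respect: both split the game into two vertex-disjoint tracks, one per player, with the odd-indexed (existential) variable gadgets on Left's track and the even-indexed (universal) ones on Right's track, and both use free deletion when the owning player moves to the bottom of a variable gadget to fix that variable's value. Your worries about synchronization, collisions, and cross-track stealing are also legitimate issues that the paper has to (and does) handle.

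The gap is the clause phase. You say the endgame should be ``modeled on the \game{DIF} clause gadgets'' and should be settled ``exactly as in the analysis for Theorem~\ref{thm:dif}'' by a local test of whether the literal vertex's successor has been deleted. That mechanism does not transfer, because \game{DIF}'s clause gadget works precisely because there is a single shared token: Right moves it into a clause vertex and Left moves it out to a literal, with the deletion pattern in the variable gadget determining who is immediately stuck. In \game{DPF} the two tokens are in separate subgraphs, so you cannot have ``Right forced to choose a clause vertex'' followed by ``Left forced to choose a literal'' along the same walk. The paper instead uses an entirely different device: Right traverses a \emph{clause deletion path} whose odd vertices succeed all clause vertices, so each of Right's moves onto an odd clause-deletion vertex lets him free-delete one more clause vertex, leaving exactly one for Left to be forced into after her $2m-3$ forced moves through a parallel \emph{delay path}. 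From there the outcome is decided not by a one-move local test but by a \emph{length race}: Left dives into a linker of length $m-3$ plus two redundant end vertices, Right dives into an escape path of length $m-1$, and Left wins if and only if her linker still attaches to a surviving variable vertex (giving her one extra move). You would need some such clause-selection and race mechanism; the \game{DIF} clause gadget alone will not do it. Also note that in the paper's construction the two tracks never share a vertex at all (not even in the clause phase), the natural alternation of variable decisions falls out of the gadget structure without padding chains, and the cross-track ``stealing'' concern is defused because the two parallel final linker vertices provide redundancy, so an opportunistic free deletion there cannot disconnect a linker.
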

\begin{proof}
We proceed via a reduction from \game{TQBF} to \game{DPF}. Let $Q=\exists x_1\forall x_2\cdots\exists x_{n-1}\forall x_n c_1\wedge c_2\wedge\cdots\wedge c_m$ be an instance of \game{TQBF}. Without loss of generality, assume that $m\geq4$ and that $m$ is even (add copies of clauses if this is not the case). We construct an instance of \game{DPF} as follows:
\begin{itemize}
\item For each variable $x_i$, there is a diamond structure called a \emph{variable gadget}, with the top vertex preceding each of the left and right vertices, each of which precedes the bottom vertex. 
(See Figure~\ref{fig:dpfvar}.)
\item For each clause $c_j$, there is a single vertex.
\item Left's token starts at the top vertex of the $x_1$ gadget. Right's token starts at the top vertex of the $x_2$ gadget.
\item For $1\leq i<\frac{n}{2}$, the bottom vertex of the $x_{2i-1}$ gadget is also the top vertex of the $x_{2i+1}$ gadget, and the bottom vertex of the $x_{2i}$ gadget is also the top vertex of the $x_{2i+2}$ gadget.
\item There is a path of $2m-3$ vertices, called \emph{delay vertices}. The first delay vertex succeeds the bottom vertex of the $x_{n-1}$ gadget, and the last delay vertex precedes each clause vertex.
\item There is a path of $2m-3$ vertices, called \emph{clause deletion vertices}. The first clause deletion vertex succeeds the bottom vertex of the $x_n$ gadget. Each odd-indexed clause deletion vertex succeeds each clause vertex.
\item There is a path of $m-1$ vertices, called \emph{escape vertices}. The first escape vertex succeeds the last clause deletion vertex.
\item For each literal $x_i$ appearing in clause $c_j$, there is a  path of $m-3$ vertices succeeding $c_j$. The last vertex in this path precedes two vertices, each of which precede the right vertex in the gadget for $x_i$. For each literal $\bar{x}_i$ appearing in clause $c_j$, there is a path of $m-3$ vertices succeeding $c_j$. The last vertex in this path precedes two vertices, each of which precede the left vertex in the gadget for $x_i$. We refer to each such structure linking a clause vertex back to a variable gadget as a \emph{linker}.
\end{itemize}
See Figure~\ref{fig:dpf} for a schematic of this construction. Delay vertices are marked by \textsc{dly}. Clause deletion vertices are marked by \textsc{dlt}. Escape vertices are marked by \textsc{esc}. Linker vertices are marked by \textsc{lnk}. There are several linkers that are not pictured; only the linkers from $c_1$ are shown.

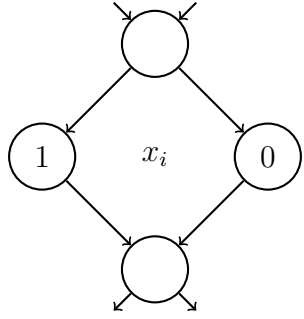
\begin{figure}
\begin{center}
\begin{tikzpicture}[thick, node distance=1.5cm, circ/.style={draw, circle, minimum size=25pt}]

\node[circ] at (0, 0) (top) {};
\node at (0, -1.5cm) (var) {$x_i$};
\node[circ] at (-1.5cm, -1.5cm) (true) {$1$};
\node[circ] at (1.5cm, -1.5cm) (false) {$0$};
\node[circ] at (0, -3cm) (bottom) {};
\node (in1) at (-0.7cm, 0.7cm) {};
\node (in2) at (0.7cm, 0.7cm) {};
\node (out1) at (-0.7cm, -3.7cm) {};
\node (out2) at (0.7cm, -3.7cm) {};

\draw [->] (top)->(true);
\draw [->] (true)->(bottom);
\draw [->] (top)->(false);
\draw [->] (false)->(bottom);
\draw [->] (in1)->(top);
\draw [->] (in2)->(top);
\draw [->] (bottom)->(out1);
\draw [->] (bottom)->(out2);
\end{tikzpicture}
\end{center}
\caption{Variable gadget for \game{DPF}. Symbols $1$ and $0$ denote vertices corresponding to the variable $x_i$ being set to {\ttt} or \fff, respectively.}
\label{fig:dpfvar}
\end{figure}

\begin{figure}
\begin{center}
\begin{tikzpicture}[thick, node distance=1.5cm, circ/.style={draw, circle, minimum size=25pt}]

\node[circ, align=center] at (0.000000cm, 0.000000cm) (x1top) {$L$};
\node[circ] at (-1.000000cm, -1.000000cm) (x1true) {};
\node[circ] at (1.000000cm, -1.000000cm) (x1false) {};
\node at (0.000000cm, -1.000000cm) (x1var) {$x_{1}$};
\node[circ, align=center] at (10.000000cm, 0.000000cm) (x2top) {$R$};
\node[circ] at (9.000000cm, -1.000000cm) (x2true) {};
\node[circ] at (11.000000cm, -1.000000cm) (x2false) {};
\node at (10.000000cm, -1.000000cm) (x2var) {$x_{2}$};
\node[circ] at (0.000000cm, -2.000000cm) (x3top) {};
\node[circ] at (-1.000000cm, -3.000000cm) (x3true) {};
\node[circ] at (1.000000cm, -3.000000cm) (x3false) {};
\node at (0.000000cm, -3.000000cm) (x3var) {$x_{3}$};
\node[circ] at (10.000000cm, -2.000000cm) (x4top) {};
\node[circ] at (9.000000cm, -3.000000cm) (x4true) {};
\node[circ] at (11.000000cm, -3.000000cm) (x4false) {};
\node at (10.000000cm, -3.000000cm) (x4var) {$x_{4}$};
\node[circ] at (0.000000cm, -4.000000cm) (lfakebottom) {};
\node[circ] at (10.000000cm, -4.000000cm) (rfakebottom) {};
\node at (-0.800000cm, -4.800000cm) (llout) {};
\node at (0.800000cm, -4.800000cm) (lrout) {};
\node at (0.000000cm, -5.000000cm) (ldots) {$\vdots$};
\node at (9.200000cm, -4.800000cm) (rlout) {};
\node at (10.800000cm, -4.800000cm) (rrout) {};
\node at (10.000000cm, -5.000000cm) (rdots) {$\vdots$};
\node[circ] at (0.000000cm, -6.000000cm) (xn1top) {};
\node[circ] at (-1.000000cm, -7.000000cm) (xn1true) {};
\node[circ] at (1.000000cm, -7.000000cm) (xn1false) {};
\node at (0.000000cm, -7.000000cm) (xn1var) {$x_{n-1}$};
\node[circ] at (10.000000cm, -6.000000cm) (xntop) {};
\node[circ] at (9.000000cm, -7.000000cm) (xntrue) {};
\node[circ] at (11.000000cm, -7.000000cm) (xnfalse) {};
\node at (10.000000cm, -7.000000cm) (xnvar) {$x_{n}$};
\node at (-0.800000cm, -5.200000cm) (llin) {};
\node at (0.800000cm, -5.200000cm) (lrin) {};
\node at (9.200000cm, -5.200000cm) (rlin) {};
\node at (10.800000cm, -5.200000cm) (rrin) {};
\node[circ] at (0.000000cm, -8.000000cm) (lbottom) {};
\node[circ] at (10.000000cm, -8.000000cm) (rbottom) {};
\node[circ] at (0.000000cm, -10.000000cm) (dly1) {{\footnotesize $\text{\textsc{dly}}_{1}$}};
\node[circ] at (10.000000cm, -10.000000cm) (cd1) {{\footnotesize $\text{\textsc{dlt}}_{1}$}};
\node[circ] at (0.000000cm, -12.000000cm) (dly2) {{\footnotesize $\text{\textsc{dly}}_{2}$}};
\node[circ] at (10.000000cm, -12.000000cm) (cd2) {{\footnotesize $\text{\textsc{dlt}}_{2}$}};
\node[circ] at (0.000000cm, -14.000000cm) (dly3) {{\footnotesize $\text{\textsc{dly}}_{3}$}};
\node[circ] at (10.000000cm, -14.000000cm) (cd3) {{\footnotesize $\text{\textsc{dlt}}_{3}$}};
\node at (0.000000cm, -16.000000cm) (dlydots) {$\vdots$};
\node[circ] at (0.000000cm, -18.000000cm) (dlym1) {{\footnotesize $\text{\textsc{dly}}_{2m-3}$}};
\node at (10.000000cm, -16.000000cm) (cddots) {$\vdots$};
\node[circ] at (10.000000cm, -18.000000cm) (cdm1) {{\footnotesize $\text{\textsc{dlt}}_{2m-3}$}};
\node[circ] at (5.000000cm, -10.000000cm) (c1) {$c_{1}$};
\node[circ] at (5.000000cm, -12.000000cm) (c2) {$c_{2}$};
\node[circ] at (5.000000cm, -14.000000cm) (c3) {$c_{3}$};
\node at (5.000000cm, -16.000000cm) (clausedots) {$\vdots$};
\node[circ] at (5.000000cm, -18.000000cm) (cm) {$c_m$};
\node[circ] at (13.000000cm, -18.000000cm) (e1) {{\footnotesize $\text{\textsc{esc}}_{1}$}};
\node[circ] at (13.000000cm, -16.000000cm) (e2) {{\footnotesize $\text{\textsc{esc}}_{2}$}};
\node at (13.000000cm, -12.000000cm) (edots) {$\vdots$};
\node[circ] at (13.000000cm, -10.000000cm) (em1) {{\footnotesize $\text{\textsc{esc}}_{m-1}$}};
\node[circ] at (3.000000cm, -8.000000cm) (l1i1) {{\footnotesize $\text{\textsc{lnk}}_{1}$}};
\node[circ] at (3.000000cm, -6.500000cm) (l1i2) {{\footnotesize $\text{\textsc{lnk}}_{2}$}};
\node at (3.000000cm, -5.000000cm) (l1dots) {$\vdots$};
\node[circ] at (3.000000cm, -3.500000cm) (l1m3) {{\footnotesize $\text{\textsc{lnk}}_{m-3}$}};
\node[circ] at (2.500000cm, -2.000000cm) (l1m2a) {};
\node[circ] at (3.500000cm, -2.000000cm) (l1m2b) {};
\node[circ] at (5.000000cm, -8.000000cm) (l2i1) {{\footnotesize $\text{\textsc{lnk}}_{1}$}};
\node[circ] at (5.000000cm, -6.500000cm) (l2i2) {{\footnotesize $\text{\textsc{lnk}}_{2}$}};
\node at (5.000000cm, -5.000000cm) (l2dots) {$\vdots$};
\node[circ] at (5.000000cm, -3.500000cm) (l2m3) {{\footnotesize $\text{\textsc{lnk}}_{m-3}$}};
\node[circ] at (4.500000cm, -2.000000cm) (l2m2a) {};
\node[circ] at (5.500000cm, -2.000000cm) (l2m2b) {};
\node[circ] at (7.000000cm, -8.000000cm) (l3i1) {{\footnotesize $\text{\textsc{lnk}}_{1}$}};
\node[circ] at (7.000000cm, -6.500000cm) (l3i2) {{\footnotesize $\text{\textsc{lnk}}_{2}$}};
\node at (7.000000cm, -5.000000cm) (l3dots) {$\vdots$};
\node[circ] at (7.000000cm, -3.500000cm) (l3m3) {{\footnotesize $\text{\textsc{lnk}}_{m-3}$}};
\node[circ] at (6.500000cm, -2.000000cm) (l3m2a) {};
\node[circ] at (7.500000cm, -2.000000cm) (l3m2b) {};

\draw [->] (x1top) -> (x1true);
\draw [->] (x1top) -> (x1false);
\draw [->] (x2top) -> (x2true);
\draw [->] (x2top) -> (x2false);
\draw [->] (x1true) -> (x3top);
\draw [->] (x1false) -> (x3top);
\draw [->] (x3top) -> (x3true);
\draw [->] (x3top) -> (x3false);
\draw [->] (x2true) -> (x4top);
\draw [->] (x2false) -> (x4top);
\draw [->] (x4top) -> (x4true);
\draw [->] (x4top) -> (x4false);
\draw [->] (x3true) -> (lfakebottom);
\draw [->] (x3false) -> (lfakebottom);
\draw [->] (x4true) -> (rfakebottom);
\draw [->] (x4false) -> (rfakebottom);
\draw [->] (lfakebottom) -> (llout);
\draw [->] (lfakebottom) -> (lrout);
\draw [->] (rfakebottom) -> (rlout);
\draw [->] (rfakebottom) -> (rrout);
\draw [->] (xn1top) -> (xn1true);
\draw [->] (xn1top) -> (xn1false);
\draw [->] (xntop) -> (xntrue);
\draw [->] (xntop) -> (xnfalse);
\draw [->] (llin) -> (xn1top);
\draw [->] (lrin) -> (xn1top);
\draw [->] (rlin) -> (xntop);
\draw [->] (rrin) -> (xntop);
\draw [->] (xn1true) -> (lbottom);
\draw [->] (xn1false) -> (lbottom);
\draw [->] (xntrue) -> (rbottom);
\draw [->] (xnfalse) -> (rbottom);
\draw [->] (dly2) -> (dly3);
\draw [->] (cd1) -> (cd2);
\draw [->] (cd2) -> (cd3);
\draw [->] (lbottom) -> (dly1);
\draw [->] (dly1) -> (dly2);
\draw [->] (dly3) -> (dlydots);
\draw [->] (dlydots) -> (dlym1);
\draw [->] (rbottom) -> (cd1);
\draw [->] (cd3) -> (cddots);
\draw [->] (cddots) -> (cdm1);
\draw [->] (c1) -> (cd1);
\draw [->] (c1) -> (cd3);
\draw [->] (c1) -> (cdm1);
\draw [->] (dlym1) -> (c1);
\draw [->] (c2) -> (cd1);
\draw [->] (c2) -> (cd3);
\draw [->] (c2) -> (cdm1);
\draw [->] (dlym1) -> (c2);
\draw [->] (c3) -> (cd1);
\draw [->] (c3) -> (cd3);
\draw [->] (c3) -> (cdm1);
\draw [->] (dlym1) -> (c3);
\draw [->] (cm) -> (cd1);
\draw [->] (cm) -> (cd3);
\draw [->] (cm) -> (cdm1);
\draw [->] (dlym1) -> (cm);
\draw [->] (e1) -> (e2);
\draw [->] (cdm1) -> (e1);
\draw [->] (e2) -> (edots);
\draw [->] (edots) -> (em1);
\draw [->] (l1i1) -> (l1i2);
\draw [->] (c1) -> (l1i1);
\draw [->] (l1i2) -> (l1dots);
\draw [->] (l1dots) -> (l1m3);
\draw [->] (l1m3) -> (l1m2a);
\draw [->] (l1m3) -> (l1m2b);
\draw [->] (l2i1) -> (l2i2);
\draw [->] (c1) -> (l2i1);
\draw [->] (l2i2) -> (l2dots);
\draw [->] (l2dots) -> (l2m3);
\draw [->] (l2m3) -> (l2m2a);
\draw [->] (l2m3) -> (l2m2b);
\draw [->] (l3i1) -> (l3i2);
\draw [->] (c1) -> (l3i1);
\draw [->] (l3i2) -> (l3dots);
\draw [->] (l3dots) -> (l3m3);
\draw [->] (l3m3) -> (l3m2a);
\draw [->] (l3m3) -> (l3m2b);

\draw [->] (l1m2a) to (x1false);
\draw [->] (l1m2b) to [out=90, in=0, looseness=0] (x1false);
\draw [->] (l2m2a) to [out=90, in=135, looseness=0.3] (x2false);
\draw [->] (l2m2b) to [out=45, in=210, looseness=1.15] (x2false);
\draw [->] (l3m2a) to [out=55, in=90, looseness=0.7] (x4true);
\draw [->] (l3m2b) to (x4true);
\end{tikzpicture}
\end{center}
\caption{Schematic diagram \game{DPF} construction from \game{TQBF} instance $\exists x_1\forall x_2\cdots\exists x_{n-1}\forall x_n\pb{x_1\vee x_2\vee\bar{x}_4}\wedge c_2\wedge\cdots\wedge c_m$. Linkers succeeding $c_2$ through $c_m$ exist but are not shown.}
\label{fig:dpf}
\end{figure}

Let us analyze how the game progresses. First, the tokens moves through the variable gadgets. Then, Left's token traverses the delay vertices while Right's token traverses the clause deletion vertices. After $m-1$ moves by each player, both tokens are at the bottom of their respective structures, with Left to move. Left moves to a clause vertex. We observe later how  the game proceeds from here.

First, let us examine the play options through the variable gadgets. When the active player's token is at the top vertex of the gadget for $x_i$, the active player has two move options (the left and right vertices in the gadget), and that player can delete the top vertex of the gadget or any vertex in a linker preceding the vertex moved to. Neither this move nor this deletion affects the outcome. The move after this decision is forced, to the bottom of the gadget, but either the left or right vertex of the gadget can be deleted. This decision is key, and it is important to note that these decisions are made in order of the index of the variables by alternating players, starting with Left.

After all of the variable gadgets have been traversed, Left's token is at the bottom of the $x_{n-1}$ gadget, and Right's token is at the bottom of the $x_n$ gadget, with Left to move. Her next $2m-3$ moves are forced to regularly traverse the chain of delay vertices. 
Meanwhile, Right makes $2m-3$ moves through the clause deletion vertices. After each move to an odd-indexed clause deletion vertex, he can delete any remaining clause vertex, or he can delete the vertex he moved from. Not deleting a clause when able only serves to give Left more options later in the game, so Right can do no better than to always delete clause vertices. Right's moves to even-indexed clause deletion vertices are forced to be regular. When Left is to move from the last delay vertex, Right's token is at the last clause deletion vertex, and there is one clause vertex that has not been deleted. Left is forced to move regularly to that vertex. Then, Right is forced to move regularly to the first escape vertex (deleting the last clause deletion vertex). 
Now, all of the odd-indexed clause deletion vertices have been deleted, so Left must move to a linker.

Once in a linker, Left's next $m-4$ moves are regular and forced along a path. The move after that is also regular to one of the two last vertices of the linker, at least one of which remains. At that point, Left is out of moves if the vertex in the variable gadget succeeding the linker has been deleted and can move to that vertex otherwise. So, Left has $m-3$ moves remaining if she chooses a linker that no longer connects to a variable gadget, and she has at least $m-2$ moves remaining if the linker does still connect. Since Right moves next with $m-2$ moves remaining in the escape path, Left wins if and only if the linker still connects to a variable gadget.

We now claim that $Q$ is true if and only if Left wins the corresponding game of \game{DPF}. First, suppose $Q$ is true. Then, there is some setting of the variables $x_1,x_3,\ldots,x_{n-1}$ so that the formula is true regardless of the values of $x_2,x_4,\ldots,x_n$. Take such a setting, and, as Left, adopt the following strategy:
\begin{itemize}
\item When moving to the bottom vertex of the gadget for $x_i$, delete the left vertex if $x_i$ should be set to \ttt, and delete the right vertex if $x_i$ should be set to \fff.
\item When moving from a clause vertex, enter a linker corresponding to a true literal.
\end{itemize}
Since $Q$ is true, at least one of the remaining clause's literals is true. The successor of the linker from that clause vertex corresponding to that literal has not been deleted.

Now, suppose $Q$ is false. Then, there is some setting of the variables $x_2,x_4,\ldots,x_n$ so that the formula is false regardless of the values of $x_1,x_3,\ldots,x_{n-1}$. Take such a setting, and, as Right, adopt the following strategy:
\begin{itemize}
\item When moving to the bottom vertex of the gadget for $x_i$, delete the left vertex if $x_i$ should be set to \ttt, and delete the right vertex if $x_i$ should be set to \fff.
\item For some $j$ such that clause $c_j$ is false, avoid deleting the vertex corresponding to $c_j$ (so that Left is forced to move there).
\end{itemize}
As we have seen, later in the game, Left moves to the clause vertex corresponding to $c_j$. This clause is false, so all of its literals are false. So, the successors of every linker from that clause vertex have been deleted, so Right wins.

This completes the reduction showing that $\text{\game{TQBF}}\leq_p DPF$. Therefore, \game{DPF} is \ps-hard, and, hence, \ps-complete, as required.

Finally, we claim that the underlying graph in the \game{DPF} instance is bipartite. Consider the partition $\pb{A,B}$ as follows:
\begin{description}
\item[$A$:]\lb
\begin{itemize}
\item Top and bottom vertices of variable gadgets.
\item Even-indexed delay vertices.
\item Even-indexed clause deletion vertices.
\item Clause vertices.
\item Both vertices in each linker that precede a node in a variable gadget.
\item Remaining linker vertices that are not in $B$.
\item Odd-indexed escape vertices.
\end{itemize}
\item[$B$:]\lb
\begin{itemize}
\item Left and right vertices of variable gadgets.
\item Odd-indexed delay vertices.
\item Odd-indexed clause deletion vertices.
\item Alternating vertices along the path in each linker, starting with the vertex that succeeds a clause vertex.
\item Even-indexed escape vertices.
\end{itemize}
\end{description}
We claim that $\pb{A,B}$ is a bipartition. We check a few potential problem spots; the rest of the claim is obvious. First, since $2m-3$ is odd, the last delay vertex is in $B$. This is consistent with the clause vertices being in $A$. Similarly, the clause vertices only precede odd-indexed clause deletion vertices, all of which are in $B$. Finally, $m-3$ is odd (since $m$ is even), so the linker vertices that precede two vertices in a linker are in $B$. Their successors are in $A$, and the successors of those vertices, as left or right vertices in variable gadgets, are in $B$, as required.
\end{proof}

\section{Undirected Partizan Geography}\label{s:up}

In this section, we analyze the games \game{UPR} and \game{UPF}, both of which are \ps-complete.

\subsection{Complexity of UPR}\label{ss:upr}

The game \game{UPR} is often referred to as \game{Undirected Partizan Geography}. It is previously known to be \textsc{np}-hard, via a reduction from the Hamiltonian path problem~\cite{fsugeog}. We prove now that \game{Undirected Partizan Geography} is, in fact, \ps-hard, even when restricted to bipartite graphs.

\begin{theorem}\label{thm:upr}
The game \game{UPR} is \ps-complete, even when restricted to bipartite graphs.
\end{theorem}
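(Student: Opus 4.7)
The plan is to reduce from \game{TQBF}, following the general layout of the \game{DPF} reduction (Theorem~\ref{thm:dpf}) but reworking each gadget for the undirected, restricted-deletion setting. Given an instance $Q = \exists x_1 \forall x_2 \cdots \forall x_n (c_1 \wedge \cdots \wedge c_m)$, I would build an undirected bipartite graph $G$ together with starting vertices for Left's and Right's tokens such that Left wins iff $Q$ is true. Left's token traverses the existentially quantified (odd-indexed) variable gadgets and later verifies a literal; Right's token traverses the universally quantified (even-indexed) ones and later forces Left onto a specific clause.

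For each variable $x_i$ I use a diamond gadget with top $t_i$, side vertices $\ell_i^{\ttt}$ and $\ell_i^{\fff}$, and bottom $b_i$. The appropriate player's token enters at $t_i$ and must choose one side to reach $b_i$; by restricted deletion the traversed side is erased while the other survives. Matching the \game{DPF} encoding, setting $x_i = \ttt$ corresponds to traversing $\ell_i^{\ttt}$; in each clause, literal $x_i$ is wired via a linker path to $\ell_i^{\fff}$ and literal $\bar{x}_i$ to $\ell_i^{\ttt}$, so that a true literal leaves its linker's endpoint alive. The variable gadgets are chained alternately along Left's and Right's routes with short synchronization paths so that the assignments occur in the order $x_1, x_2, \ldots, x_n$. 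Because the diamond is small, any deviation from the intended $t_i \to \text{side} \to b_i$ path strands the deviator in a dead end---including premature entry into a literal linker, which can be made self-defeating by choosing the linker's length so that the token would fail to return before running out of moves.

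For the clause phase I replace \game{DPF}'s free deletion with a branching-path gadget: Right's token walks a sequence of fork vertices, and at each fork he may either step straight ahead or take a length-four detour that passes through (and thus deletes) one clause vertex. The total length of the phase is tuned so that Right can afford at most $m-1$ such detours; since leaving more than one clause accessible only helps Left, Right is forced to delete exactly $m-1$ clauses of his choosing. Left meanwhile traverses a delay path of matched length, arrives at the one remaining clause vertex, and walks its linker back toward the corresponding variable gadget while Right proceeds along a matched-length escape path; linker, delay, and escape lengths are chosen with the correct parity so that Right becomes stuck first when the chosen literal is true and Left becomes stuck first when it is false. The main technical obstacle, which I expect to be the heart of the proof, is this per-player move counting: because partizan loss is individual, every delay, linker, detour, and escape length must be set so that the intended player exhausts their moves at the right moment, and every off-strategy deviation---especially retraversal along an undirected edge---must remain strictly losing for the deviator. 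Bipartiteness is then verified by a direct 2-coloring of each gadget, inserting a single pendant in a linker if needed to correct its parity.
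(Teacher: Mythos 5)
Your proposal shares the paper's broad outline---reduce from \game{TQBF}, chain diamond variable gadgets, have Right spare exactly one clause while Left is delayed, then decide the game by a linker/escape race---but the clause-selection mechanism is genuinely different, and that is where the gap is. The paper installs a complete bipartite graph $K_{m,m-1}$ whose side of size $m$ consists of the clause vertices: once Right enters it from the bottom of the $x_n$ gadget, the bipartite alternation forces him to pass through exactly $m-1$ of the $m$ clause vertices before reaching the escape path; the only freedom he retains is \emph{which} clause to leave alive. Your fork-and-detour gadget, by contrast, lets Right choose \emph{how many} detours to take, and you invoke a move-count budget (``Right can afford at most $m-1$'') to pin it at $m-1$. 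But under normal play Right benefits from having more moves, not fewer, and each detour adds moves to his route; absent a structural prohibition, nothing stops Right from taking all $m$ detours, killing every clause vertex (so Left has no clause to reach), while simultaneously lengthening his own path. To make ``no surviving clause'' punishable you would need to give Left an alternative route that beats Right in that scenario, and the proposal supplies none. This is the missing idea, and it is exactly what the $K_{m,m-1}$ structure is there to provide.

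Two further points. A length-four detour between a pair of vertices that are also joined directly by an edge cannot exist in a bipartite graph, since the endpoints would have to lie simultaneously in opposite parts (one edge) and the same part (even-length path); the detour must have odd length, and ``inserting a pendant'' in a path does not change the parity of its traversal---the paper instead fixes parity by outright choosing path lengths ($n+2$ vertices for left-linkers, $n+7$ for right-linkers). Finally, the paper replaces your single delay path with a large complete bipartite delay graph $K_{m+\frac{n}{2}+5,\,m+\frac{n}{2}+5}$ together with an \textsc{exit} vertex adjacent to all of Part~II; this gives Left a pool of interchangeable stalling moves that the deviation analysis uses repeatedly (for instance, Left absorbs several extra moves by bouncing between Part~I and Part~II before exiting when Right enters a right-linker prematurely). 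A bare delay path is unlikely to survive that case analysis, which you correctly identify as the heart of the proof but do not carry out.
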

\begin{proof}
We proceed via a reduction from \game{TQBF} to \game{UPR} on bipartite graphs. Let $Q=\exists x_1\forall x_2\cdots\exists x_{n-1}\forall x_n c_1\wedge c_2\wedge\cdots\wedge c_m$ be an instance of \game{TQBF}. Without loss of generality, assume that $n\geq4$, that $m\geq2$, and that every literal appears in some clause. (If this is not the case, append a cause of the form $x_i\vee x_i\vee\bar{x}_i$ for any literal $x_i$ or $\bar{x}_i$ not appearing.) We construct an instance of \game{UPR} as follows.
\begin{itemize}
\item For each variable $x_i$, there is a diamond structure called a \emph{variable gadget}, with the top vertex adjacent to each of the left and right vertices, each of which is adjacent to the bottom vertex (much like the variable gadget for \game{DPF}, but with undirected edges).
\item Left's token starts at the top vertex of the $x_1$ gadget. Right's token starts at the top vertex of the $x_2$ gadget.
\item For $1\leq i<\frac{n}{2}$, the bottom vertex of the $x_{2i-1}$ gadget is also the top vertex of the $x_{2i+1}$ gadget, and the bottom vertex of the $x_{2i}$ gadget is also the top vertex of the $x_{2i+2}$ gadget.
\item There is a complete bipartite graph $K_{m,m-1}$ called the \emph{clause selection gadget}. The vertices on the side with $m$ vertices correspond to the clauses; these are called \emph{clause vertices}. (See Figure~\ref{fig:uprcl}.) 
\item Each clause vertex is adjacent to the bottom vertex of the $x_n$ gadget.
\item There is a complete bipartite graph $K_{m+\frac{n}{2}+5,m+\frac{n}{2}+5}$ called the \emph{delay graph}, with its two parts known as \emph{Part~I} and \emph{Part~II}. All vertices in Part~I are adjacent to the bottom vertex of the $x_{n-1}$ gadget.
\item There is a vertex called \textsc{exit} that is adjacent to the bottom vertex of the $x_{n-1}$ gadget and adjacent to every vertex in Part~II of the delay graph.
\item There is a path linking \textsc{exit} to each clause vertex. There are $n+4$ vertices along this path not including \textsc{exit} or the clause vertex. Each such path is known as a \emph{clause connector}.
\item For each literal $x_i$ appearing in clause $c_j$, there are two independent paths from the vertex for $c_j$ to the right vertex in the gadget for $x_i$. For each literal $\bar{x}_i$ appearing in clause $c_j$, there are two independent paths from the vertex for $c_j$ to the left vertex in the gadget for $x_i$. If $i$ is odd, there are $n+2$ vertices along each path, not including the clause vertex or the vertex in the variable gadget. If $i$ is even, there are $n+7$ vertices along each path, not including the clause vertex or the vertex in the variable gadget.  We refer to each such path linking a clause vertex back to a variable gadget as a \emph{linker}. (Each clause vertex has six linkers attached to it). If $i$ is odd, we call the path a \emph{left-linker}, and if $i$ is even, we call it a \emph{right-linker}.
\item There is a path of $3n+25$ vertices, known as \emph{escape vertices}. The first escape vertex is adjacent to each non-clause vertex in the clause selection gadget. 
\end{itemize}
See Figure~\ref{fig:upr} for a schematic of this construction. Escape vertices are marked by \textsc{esc}. The clause selection gadget is the complete bipartite graph below Right's variable gadgets. The delay graph is the complete bipartite graph below Left's variable gadgets. The clause connectors are the paths between \textsc{exit} and the clauses. The paths connecting $c_1$ to the right vertex of the $x_1$ gadget are left-linkers and contain $n+2$ vertices each. The paths connecting $c_1$ to other variable gadgets are right-linkers and contain $n+7$ vertices each. There are several linkers that are not pictured; only the linkers adjacent to $c_1$ are shown.

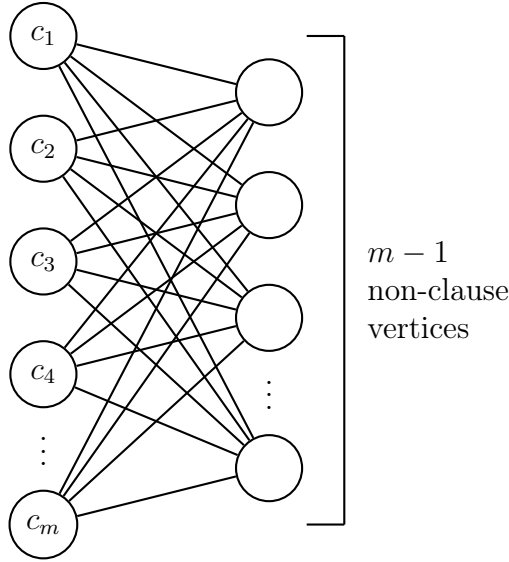
\begin{figure}
\begin{center}
\begin{tikzpicture}[thick, node distance=1.5cm, circ/.style={draw, circle, minimum size=25pt}]

\node[circ] at (0,0) (c1) {$c_1$};
\node[circ] at (0,-1.5) (c2) {$c_2$};
\node[circ] at (0,-3) (c3) {$c_3$};
\node[circ] at (0,-4.5) (c4) {$c_4$};
\node at (0,-5.4) (cdots) {$\vdots$};
\node[circ] at (0,-6.5) (cm) {$c_m$};

\node[circ] at (3,-0.75) (c1p) {};
\node[circ] at (3,-2.25) (c2p) {};
\node[circ] at (3,-3.75) (c3p) {};
\node at (3,-4.65) (cdotsp) {$\vdots$};
\node[circ] at (3,-5.75) (cmp) {};

\draw (4,0) -- (4, -6.5);
\draw (4,0) -- (3.5,0);
\draw (4,-6.5) -- (3.5,-6.5);

\node[align=left] at (5.25, -3.375) (bracket) {$m-1$\\non-clause\\vertices};

\draw  (c1) -> (c1p);
\draw  (c1) -> (c2p);
\draw  (c1) -> (c3p);
\draw  (c1) -> (cmp);
\draw  (c2) -> (c1p);
\draw  (c2) -> (c2p);
\draw  (c2) -> (c3p);
\draw  (c2) -> (cmp);
\draw  (c3) -> (c1p);
\draw  (c3) -> (c2p);
\draw  (c3) -> (c3p);
\draw  (c3) -> (cmp);
\draw  (c4) -> (c1p);
\draw  (c4) -> (c2p);
\draw  (c4) -> (c3p);
\draw  (c4) -> (cmp);
\draw  (cm) -> (c1p);
\draw  (cm) -> (c2p);
\draw  (cm) -> (c3p);
\draw  (cm) -> (cmp);
\end{tikzpicture}
\end{center}
\caption{Clause selection gadget for \game{UPR}.}
\label{fig:uprcl}
\end{figure}

\begin{figure}
\begin{center}
\input{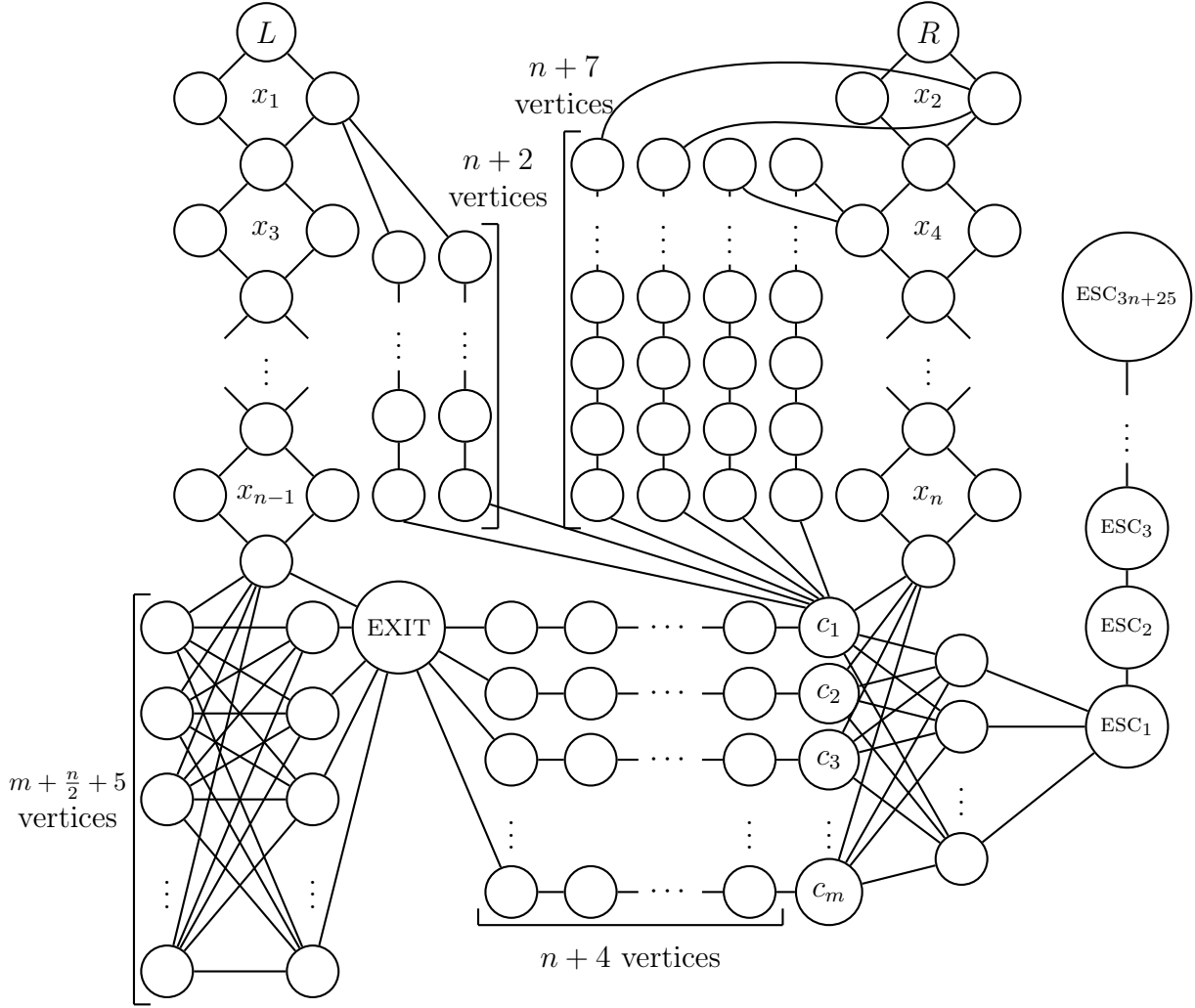}
\end{center}
\caption{Schematic diagram \game{UPR} construction from \game{TQBF} instance $\exists x_1\forall x_2\cdots\exists x_{n-1}\forall x_n\pb{x_1\vee x_2\vee\bar{x}_4}\wedge c_2\wedge\cdots\wedge c_m$. Linkers adjacent to $c_2$ through $c_m$ exist but are not shown.}
\label{fig:upr}
\end{figure}

We claim first that the graph is bipartite. Note that each individual building block is bipartite, as each is a $4$-cycle, a path, or a complete bipartite graph. Consider the following partition of the graph into two parts, $A$ and $B$:
\begin{description}
\item[$A$:]\lb
\begin{itemize}
\item Top and bottom vertices of the gadgets for variables $x_i$ for odd $i$
\item Left and right vertices of the gadgets for variables $x_i$ for even $i$
\item Part~II of the delay graph
\item Every other vertex of each clause connector, starting with the vertex adjacent to \textsc{exit}
\item The clause vertices
\item Every other escape vertex, starting with the first one
\item The vertices in linkers not in $B$.
\end{itemize}
\item[$B$:]\lb
\begin{itemize}
\item Left and right vertices of the gadgets for variables $x_i$ for odd $i$
\item Top and bottom vertices of the gadgets for variables $x_i$ for even $i$
\item Part~I of the delay graph
\item \textsc{exit}
\item The vertices in clause connectors not in $A$
\item The vertices in the clause selection gadget that are not clause vertices
\item Every other escape vertex, starting with the second one
\item Every other vertex in each linker, starting with the vertex adjacent to a clause vertex.
\end{itemize}
\end{description}
We claim that this is a bipartition of the graph. There are only three things to check that aren't immediately clear. First, since $n+4$ is even, the vertex of a clause connector adjacent to a clause vertex is in $B$ (since the vertex adjacent to \textsc{exit} is in $A$). The clause vertices are in $A$, as required. Second, since $n+2$ is even, the vertex in each left-linker adjacent to a variable vertex is in $A$ (since the vertex adjacent to a clause vertex is in $B$). That variable vertex is a left or right vertex of a gadget for $x_i$ with $i$ odd, so it is in $B$, as required. Finally, since $n+7$ is odd, the vertex in each right-linker adjacent to a variable vertex is in $B$ (since the vertex adjacent to a clause vertex is also in $B$). That variable vertex is a left or right vertex of a gadget for $x_i$ with $i$ even, so it is in $A$, as required.

Since Left's token starts in $A$ and Right's token starts in $B$, it is always Left's turn if the tokens are in opposite parts and Right's turn if the tokens are in the same part. For convenience, we refer to this property as the \emph{bipartite move invariant}. A key consequence of the bipartite move invariant is that it is always Left's turn if the tokens are on adjacent vertices.

Here is one possible way the game could progress, which we refer to as \emph{proper play}: First, in what we call \emph{Phase~I}, the tokens move downward through the variable gadgets. At the end of Phase~I, it is Left's turn, with Left's token at the bottom of the gadget for $x_{n-1}$ and Right's token at the bottom of the gadget for $x_n$. Then, to start Phase~II, Left's token enters Part~I of the delay graph immediately before Right's token enters a clause vertex. Left's next $2m-3$ moves take place in the delay graph, while Right's next $2m-3$ moves take place in the clause selection gadget. After all of these moves, Left is at some vertex in Part~II of the delay graph, while Right is at the only remaining non-clause vertex in the clause selection gadget. Moreover, only one clause vertex remains. It is now Left's move, and she moves back to Part~I of the delay graph. Right then moves to the first escape vertex. All of Right's subsequent moves take place in the escape path. Left moves through all remaining vertices in the delay graph and, exhausting all of those, moves to \textsc{exit}. From there, she moves along the clause connector to whichever clause vertex $c_j$ still remains. Upon reaching that clause, she departs via some linker. If she chooses a left-linker and she reaches a vertex in a variable gadget, she continues into another linker attached to that vertex.

We have the following claims, which would complete our reduction:
\begin{itemize}
\item If the game must progress according to proper play, then $Q$ is true if and only if Left wins the game of \game{UPR}.
\item Neither player can do better than to play properly.
\end{itemize}
For the time being, assume that the game must progress properly. First, suppose $Q$ is true. Then, there is some setting of the variables $x_1,x_3,\ldots,x_{n-1}$ so that the formula is true regardless of the values of $x_2,x_4,\ldots,x_n$. Take such a setting, and, as Left, adopt the following strategy:
\begin{itemize}
\item When moving from the top vertex of the gadget for $x_i$, move to the left vertex if $x_i$ should be set to \ttt, and move to the right vertex if $x_i$ should be set to \fff.
\item When moving from a clause vertex, enter a linker corresponding to a true literal.
\end{itemize}
Since $Q$ is true, at least one of that clause's literals is true. 
As a result, the vertex in a variable gadget that that linker connects to the clause vertex is still present. Regardless of whether it is a left-linker or a right-linker, Left has at least $n+8$ moves remaining when her token lies on the clause vertex. If the linker is a left linker, there are $n+2$ moves to vertices along the linker, one move to a vertex in a variable gadget, and five moves into another linker (since $n+2$, the number of vertices in a left linker, is at least $5$). If the linker is a right-linker, there are $n+7$ moves to vertices along the linker and one move to a vertex in a variable gadget. Now, consider the position at the end of Phase~I. Right has $2m+3n+23$ moves remaining: $2m-2$ moves in the clause gadget, followed by $3n+25$ moves in the escape path. Left has at least $2m+3n+24$ moves remaining: $2m+n+10$ moves in the delay graph, $1$ move to \textsc{exit}, $n+4$ moves through a clause connector, $1$ move into a clause vertex, and $n+8$ moves after the clause vertex. Therefore, Left wins, as required.

Now, suppose $Q$ is false. Then, there is some setting of the variables $x_2,x_4,\ldots,x_n$ so that the formula is false regardless of the values of $x_1,x_3,\ldots,x_{n-1}$. Take such a setting, and, as Right, adopt the following strategy:
\begin{itemize}
\item When moving from the top vertex of the gadget for $x_i$, move to the left vertex if $x_i$ should be set to \ttt, and move to the right vertex if $x_i$ should be set to \fff.
\item For some $j$ such that clause $c_j$ is false, avoid moving to the vertex corresponding to $c_j$ while traversing the clause selection gadget.
\end{itemize}
Around the end of the game, Left moves to the vertex for $c_j$. Since $c_j$ is false, all of that clause's literals are false. This means that Left has at most $n+7$ moves remaining from the clause vertex, since each linker from $c_j$ no longer connects to a variable gadget. Now, consider the position at the end of Phase~I. Right has $2m+3n+23$ moves remaining, as we saw in the previous case. Left has at most $2m+3n+23$ moves remaining: $2m+n+10$ moves in the delay graph, $1$ move to \textsc{exit}, $n+4$ moves through a clause connector, $1$ move into a clause vertex, and $n+7$ moves after the clause vertex. Therefore, Right wins, as required.

What remains to show is that neither player can do better by deviating from proper play. Each player has several ways in which they can deviate. We analyze each possible deviation one at a time, supposing that it is the first deviation that has taken place in the game thus far.
\begin{description}
\item[Left enters a linker from a variable gadget:] Suppose Left enters a linker connected to the clause vertex for $c_j$. When Left enters a linker, Right's token is on the left or right vertex of a variable gadget. Such a vertex is at most $n$ moves from the clause vertex for $c_j$ if Right continues down through the variable gadgets. (At worst, Right is in the $x_2$ gadget. There are $\frac{n}{2}$ gadgets, each requiring two moves to traverse. One move of the first gadget is already complete, and there is one additional move required to reach a clause.) Since it requires $n+3$ moves for Left to reach the vertex for $c_j$ via the linker, Right can reach $c_j$ before Left, deleting it as he moves elsewhere, forcing Left to run out of moves at the end of the linker.
\item[Left moves upward from the bottom vertex of a variable gadget:] Moving upward in a variable gadget forces the subsequent move to be into a linker. Right can proceed as in the previous case, now having even more time to reach the appropriate clause vertex before Left. There is one small additional consideration: If Left moves upward from the bottom of the $x_{n-1}$ gadget, Right does not know which clause she will head for and must now move to a clause himself. But, he can move to any clause, after which Left commits to a clause. Then, Right takes two more moves to reach that clause. From there, Right can move into the clause connector, and Left runs out of moves first, since left-linkers are shorter than clause connectors.
\item[Left moves to \textsc{exit} before visiting the whole delay graph:] From \textsc{exit}, Left can either move back into the delay graph or into a clause connector. If she returns to the delay graph, she is trapped there, running out of moves while Right plays properly. So, we henceforth assume that Left moves to \textsc{exit} with the intent to move next into a clause connector. Now, observe that, if Right has not yet entered the escape path when Left moves to \textsc{exit}, the bipartite move invariant implies that Right's token is not on a clause vertex; it must be in the same part ($B$) of the underlying graph as \textsc{exit}. There are two sub-cases to consider regarding the situation when Left moves to \textsc{exit}.
\begin{description}
\item[At least two clause vertices remain:] After Left moves to \textsc{exit}, Right moves to any clause vertex. Then, Left definitely loses if she moves toward the clause vertex Right's token currently occupies or toward a deleted clause vertex, so she moves toward a remaining clause vertex. Right then moves to a non-clause vertex in the clause selection gadget and then to the clause vertex that Left is headed for. From there, Right enters the same clause connector that Left is currently traversing. By the bipartite move invariant, it is Left's move when the tokens are adjacent. This results in Left losing, as required.
\item[Exactly one clause vertex remains:] If the remaining clause is true, Left would win anyway, so we can assume that the remaining clause is false and that Left is trying to win when $Q$ is false. From this point forward, Left's only option is to follow proper play, running out of moves sooner than if she had not left the delay graph early. Therefore, if Right continues according to proper play, Left loses.
\end{description}
\item[Right enters a linker from a variable gadget:] Suppose Right enters a linker connected to the clause vertex for $c_j$. When Right enters a linker, Left's token is at the bottom of a variable gadget. Such a vertex is at most $n+6$ moves from the clause vertex for $c_j$, if Left moves up the current variable gadget (one move), through a linker to a clause vertex ($n+2$ moves in the linker plus one to the clause vertex), and if necessary, through the clause selection gadget to $c_j$ ($2$ moves). Since it requires $n+7$ additional moves for Right to reach the vertex for $c_j$ via the right-linker after entering it, Left can reach the vertex for $c_j$ before Right, causing Right to run out of moves at the end of the linker.
\item[Right moves upward from the bottom vertex of a variable gadget:] Moving upward in a variable diamond forces the subsequent move to be into a linker. If Left's token is still in a variable gadget, it is on a left or right vertex, and Left can proceed as in the previous case, now having even more time to reach the appropriate clause vertex before Right. The other possibility is that Left's token is in Part~I of the delay graph. In this case, Left's next move is to any vertex in Part~II, after which Right commits to a clause $c_j$. Then, Left moves to \textsc{exit} and heads for the vertex for $c_j$ via the clause connector. The total number of moves required by Left to reach the vertex for $c_j$ once Right deviates from proper play is $n+7$ (one in the delay graph, one to \textsc{exit}, $n+4$ through the clause connector, and one to $c_j$), and Right needs $n+8$ moves to reach $c_j$ \emph{after} deviating ($n+7$ through the linker and one to the vertex for $c_j$). Therefore, Left reaches the vertex for $c_j$ before Right and wins.
\item[Right moves into a linker from a clause vertex:] We can assume without loss of generality that the variable vertex at the end of the linker Right enters is still present in the graph, as otherwise Left can continue moving properly while Right runs out of moves. This means that Right reaches that variable vertex and then immediately enters another linker. By the bipartite move invariant, when Right moves into a linker (from $A$ to $B$), Left's token is in Part~II of the delay graph (in $A$). There are two sub-cases to consider:
\begin{description}
\item[Right enters a left-linker:] In this case, Left can immediately move to \textsc{exit}, and then she can enter any clause connector toward a remaining clause. It takes $n+6$ moves for Left to reach a clause. After $n+6$ moves, since Right originally entered a left-linker, he has already passed through the variable vertex and has therefore committed to a linker to return along back to the clause selection gadget. If the clause vertex at the end of that linker still remains, Left can move to it through the clause selection gadget in two more moves, and then she can exit through the clause connector, after which Right runs out of moves first. It takes Left $n+8$ moves to reach this second clause vertex, and it takes Right $2n+6$ moves to reach it, with Right moving first. Since $n\geq4$, we have $n+8<2n+6$, so Left does indeed reach the clause vertex first, and since $n+2<n+4$, Right does indeed run out of moves first.
\item[Right enters a right-linker:] In this case, Left should move back to Part~I of the delay graph, then back to Part~II, then back to Part~I, then back to Part~II, then to \textsc{exit}. Then, she can enter any clause connector toward a remaining clause. It takes $n+10$ moves for Left to reach a clause this way. After $n+10$ moves, Right's token has entered another linker after visiting a variable vertex, and Right moves next and reaches a clause vertex (if it is still present) in $n+6$ more moves. If there is no clause that Right is headed toward, Left heads for the escape path and wins. If there is a clause vertex that Right is headed toward, Left reaches that clause in two moves and then enters the clause connector. This is a total of $n+6$ more moves for Left (two in the clause selection gadget and $n+4$ through the clause connector). Since Right moves first from this checkpoint, Right runs out of moves first and loses.
\end{description}
\item[Right moves into a clause connector from a clause vertex:] By the bipartite move invariant, when Right moves into a clause connector (into $B$), Left's token is in Part~II of the delay graph (in $A$). Left can immediately move to \textsc{exit}, and then she can enter any other clause connector. Right runs out of moves first, when he reaches the vertex in his clause connector that was once adjacent to \textsc{exit}.
\item[Right moves into the last clause vertex instead of the escape path:] This forces Right's next move to be into a clause connector or a linker. If Right moves into a clause connector, play continues according to that case above, with Left winning. If Right moves into a linker, right has at most $2n+14$ moves remaining after doing so, as there is no clause remaining to return to along a second linker. So, after Phase~I ends, Right has at most $2m+2n+14$ remaining moves, accounting for all $2m-1$ vertices in the clause gadget and $2n+15$ moves afterward. At this same point, Left has $2m+n+10$ moves in the delay graph, $1$ move to \textsc{exit}, and $n+4$ moves through a clause connector. This is a total of $2m+2n+15$ moves, so Right runs out of move first and loses.
\item[Right moves into the escape path with multiple remaining clause vertices:] If all remaining clauses are false, Right would win by playing properly, so we can assume that at least one clause is true and that Right is trying to win when $Q$ is true. Once Right enters the escape path, he can never leave it, and he has exactly $2n+24$ moves afterward. This means that Left can now continue properly, exhausting the delay graph, then traveling to a remaining true clause vertex and into a linker corresponding to a true literal. Right has fewer moves than before, so Right still runs out of moves before Left.
\end{description}

This completes the reduction showing that $\text{\game{TQBF}}\leq_p \game{UPR}$. Therefore, \game{UPR} is \ps-hard on bipartite graphs, and, hence, \ps-complete, as required.
\end{proof}


\subsection{Complexity of UPF}\label{ss:upf}

We now prove that the related game \game{UPF} is also \ps-complete. In this case, our construction is not bipartite\footnote{This construction depends critically on odd cycles (here $13$-cycles) resulting from the way variable gadgets are attached to one another.}.

\begin{theorem}\label{thm:upf}
The game \game{UPF} is \ps-complete.
\end{theorem}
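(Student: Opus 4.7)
The plan is to adapt the reduction from Theorem~\ref{thm:upr}, again going from \game{TQBF}. I would build a construction with the same overall skeleton -- variable gadgets, clause vertices arranged in a clause selection gadget, a delay graph, a long escape path, clause connectors from an \textsc{exit} vertex, and linkers from clause vertices back to variable gadgets -- so that proper play proceeds in two phases exactly as in Section~\ref{ss:upr}: tokens descend through the variable gadgets committing to a truth assignment, then Left wanders in the delay graph while Right prunes clause vertices, and finally Left races to the surviving clause vertex and out through a linker.

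The key new consideration is free deletion: when a player moves a token to a vertex $w$, that player may delete any neighbor of $w$, not only the vertex just vacated. So in principle a single move could both descend through a variable gadget and erase some crucial vertex elsewhere in the graph. To neutralize this, I would redesign how consecutive variable gadgets are attached. Instead of identifying the bottom of the gadget for $x_{2i-1}$ with the top of the gadget for $x_{2i+1}$ (as in the \game{UPR} reduction), I would connect them by a short path so that each pair of consecutive diamonds together forms an odd cycle (of length $13$, as the footnote indicates). The odd length guarantees that, along proper play, the two tokens remain far enough apart on each such cycle that no free-deletion move can reach into a future gadget or into a linker, while simultaneously destroying the bipartite structure that was exploited by Theorem~\ref{thm:upr}.

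The two ``honest'' directions of the reduction are then direct adaptations of the \game{UPR} arguments. If $Q$ is true, Left plays the winning assignment in the variable gadgets and, on reaching the surviving clause vertex, enters a linker corresponding to a true literal; a bookkeeping of move counts through the delay graph, the clause connector, the clause vertex, and the linker (with the deletions available to each player used merely to remove their own vacated vertex) shows Left ends up with one more move than Right. If $Q$ is false, Right plays the symmetric strategy, forces Left onto a clause vertex whose every linker has been severed in the variable phase, and wins the resulting race.

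The main obstacle will be the deviation analysis, which in Theorem~\ref{thm:upr} rested heavily on the bipartite move invariant to identify whose turn it is whenever the tokens become adjacent. Without bipartiteness I would replace that invariant with a distance invariant along the odd cycles, then enumerate the same families of deviations handled in Section~\ref{ss:upr} -- a player entering a linker from a variable gadget, moving upward in a gadget, leaving the delay graph early, Right moving into a clause connector or linker from a clause vertex, Right moving into a clause vertex instead of the escape path, Right entering the escape path with multiple clauses remaining -- augmented by new ``free-deletion'' deviations in which a player uses a descending move to simultaneously delete a distant useful vertex. In each case the argument will be a move count showing that the deviating player runs out of legal moves first. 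This case analysis is the step I expect to consume essentially all of the work, and the lengths of the linkers, clause connectors, delay graph, escape path, and the new connectors between variable gadgets must be tuned so that the move counts balance in every case simultaneously; once they do, polynomial-time reducibility and membership in \ps\ (by Proposition~\ref{prop:pspace}) give \ps-completeness.
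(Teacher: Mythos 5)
Your proposal keeps the overall shape of the \game{UPR} reduction — variable diamonds, a delay region, clause vertices, an escape path, and linkers — and claims that replacing the identification of consecutive diamond endpoints with a short path, so that consecutive diamonds lie on a common $13$-cycle, will ``neutralize'' free deletion by keeping the two tokens far apart. That key step does not hold up, and it is not what the paper does. Under free deletion, the deletion option depends only on which vertex the moving token just landed on (any neighbor of that vertex may be deleted), not on the distance between the two tokens. Keeping the tokens far apart on an odd cycle therefore does nothing to prevent the moving player from reaching into a linker or into a future gadget when the moved-to vertex happens to be adjacent to one. So the ``free-deletion deviations'' in Phase~I are not controlled by your construction, and the proposed ``distance invariant'' does not play the role the bipartite move invariant played in Theorem~\ref{thm:upr}.

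The paper's actual construction is structurally quite different. Each variable gadget consists of \emph{two disjoint} pieces — an $8$-cycle (the diamond) and a $5$-vertex path — and \emph{both} tokens start in the $x_1$ gadget, one on the diamond and one on the path. Consecutive gadgets are connected criss-cross (bottom of the $x_i$ diamond to the first vertex of the $x_{i+1}$ path; fourth vertex of the $x_i$ path to the top of the $x_{i+1}$ diamond; and bottom of the $x_i$ diamond to the fifth vertex of the $x_i$ path). It is this arrangement that produces the $13$-cycles mentioned in the footnote. Crucially, the construction also includes a very long \emph{win path} of $13n+5m^2+22m+21$ vertices whose first vertex is adjacent to the fifth vertex of every variable path and to a special vertex \textsc{prize}: reaching the win path guarantees victory because it is longer than the rest of the graph. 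This win path is the actual mechanism that disciplines free deletion — the ``diamond player'' must use its free-deletion move to delete the fifth vertex of the companion path, because otherwise the ``path player'' escapes to the win path and wins. There is also a three-phase progression in which Phase~III involves several carefully counted stalling stages (along clause connectors, clause deletion paths, and linkers) to make the final move count come out right; none of this appears in your sketch. In short, the proposal's central idea is unsound, and the essential device (the win path as a punishment for misuse of free deletion, together with the two-piece variable gadgets that make that threat bite) is missing.
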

\begin{proof}
We proceed via a reduction from \game{TQBF} to \game{UPF}. Let $Q=\exists x_1\forall x_2\cdots\exists x_{n-1}\forall x_n c_1\wedge c_2\wedge\cdots\wedge c_m$ be an instance of \game{TQBF}. Without loss of generality, assume that $m\geq3$. 
We construct an instance of \game{UPF} as follows:
\begin{itemize}
\item For each variable $x_i$, we have what we call the \emph{variable gadget,} which consists of two disjoint pieces: a cycle with $8$ vertices and a path with $5$ vertices. We draw and refer to the cycle as a \emph{diamond}, allowing us to define its top, leftmost, rightmost, and bottom vertices.  We draw the path vertically with the first vertex on top. See Figure~\ref{fig:upfvar} for a depiction of a variable gadget.
\item Left's token starts at the top vertex of the $x_1$ diamond. Right's token starts at the first vertex of the $x_1$ path.
\item For $1\leq i<n$, the bottom vertex of the $x_i$ diamond is adjacent to the first vertex of the $x_{i+1}$ path, and the fourth vertex of the $x_i$ path is adjacent to the top vertex of the $x_{i+1}$ diamond.
\item For $1\leq i\leq n$, the bottom vertex of the $x_i$ diamond is adjacent to the fifth vertex of the $x_i$ path.
\item There is a graph called the \emph{clause selection gadget}, which consists of several pieces. See Figure~\ref{fig:upfcl} for a depiction of the clause selection gadget; what follows are descriptions of the pieces and how they are interconnected.
\begin{itemize}
\item There are designated top and bottom vertices. The top vertex of the clause selection gadget is also the bottom vertex of the $x_n$ diamond.
\item For each clause $c_j$, there is a single vertex.
\item There are $m$ paths enumerated from $1$ to $m$, called \emph{clause deletion paths}, of $m+3$ vertices each. The first vertex of each clause deletion path is adjacent to the top vertex of the clause selection gadget. The last vertex of each clause deletion path is adjacent to the bottom vertex of the clause selection gadget.
\item Each vertex in each clause deletion path other than the first, second, second-to-last, and last, is adjacent to a clause vertex. The vertex for clause $c_j$ is adjacent to some vertex in every clause deletion path except for the $j^{th}$ path. Along a path, the indices of the adjacent clause vertices appear in increasing order.
\end{itemize}
\item There is a path of $m+9$ vertices called the \emph{delay path}. The first delay vertex is adjacent to the fourth vertex in the path for $x_n$.
\item There is a path linking the last delay vertex to each clause vertex. There are $m$ vertices along this path, not including the delay vertex and the clause vertex. Each such path is called a \emph{clause connector}.
\item For each literal $x_i$ appearing in clause $c_j$, there is a path from the vertex for $c_j$ to the right vertex in the diamond for $x_i$. For each literal $\bar{x}_i$ appearing in clause $c_j$, there is a path from the vertex for $c_j$ to the left vertex in the diamond for $x_i$. There are $m+3$ vertices along each path, not including the clause vertex and the variable vertex. We refer to each such path linking a clause vertex back to a variable gadget as a \emph{linker}.
\item There is a path of $8m+7$ vertices, known as the \emph{escape path}. The first escape vertex is adjacent to the bottom vertex of the clause selection gadget.
\item There is a vertex called \textsc{prize} adjacent to the third escape vertex and the $\pb{m+7}^{th}$ delay vertex.
\item There is a path, called the \emph{win path}, of $13n+5m^2+22m+21$ vertices. The first win vertex is adjacent to \textsc{prize} and to the fifth vertex of each path in each variable gadget.
\end{itemize}
Observe that the number of vertices in the win path is two greater than the number of vertices in the graph outside the win path. (There are $13n$ vertices in variable gadgets, $2+m+m\pb{m+3}$ in the clause gadget, $m+9$ delay vertices, $m\pb{m+3}$ total in the clause connectors, $3m^2$ total in the linkers, $8m+7$ escape vertices, and $1$ \textsc{prize} vertex.)

See Figure~\ref{fig:upf} for a schematic of this construction. There are several linkers that are not pictured; only the linkers from $c_1$ are shown. Also, there are edges between clause vertices and omitted vertices in clause deletion paths, and there are edges from omitted variable gadgets to the first win vertex.

\begin{figure}
\begin{center}
\begin{tikzpicture}[thick, node distance=1.5cm, circ/.style={draw, circle, minimum size=20pt}]

\node[circ] at (0, 0) (top) {};
\node[circ] at (-1cm, -1cm) (toptrue) {};
\node[circ] at (1cm, -1cm) (topfalse) {};
\node[circ] at (-2cm, -2cm) (true) {$1$};
\node[circ] at (2cm, -2cm) (false) {$0$};
\node[circ] at (-1cm, -3cm) (bottomtrue) {};
\node[circ] at (1cm, -3cm) (bottomfalse) {};
\node[circ] at (0, -4cm) (bottom) {};

\node[circ] at (3cm, 0) (p1) {};
\node[circ] at (3cm, -1cm) (p2) {};
\node[circ] at (3cm, -2cm) (p3) {};
\node[circ] at (3cm, -3cm) (p4) {};
\node[circ] at (3cm, -4cm) (p5) {};

\node (ind) at (0, 0.8cm) {};
\node (outd) at (0, -4.8cm) {};

\node (inp) at (3cm, 0.8cm) {};
\node (outp) at (3.8cm, -4.8cm) {};

\node (outw) at  (4.2cm, -2.2cm) {};

\node at (0, -2cm) (var) {$x_i$};

\draw (top) to (toptrue);
\draw (toptrue) to (true);
\draw (true) to (bottomtrue);
\draw (bottomtrue) to (bottom);
\draw (top) to (topfalse);
\draw (topfalse) to (false);
\draw (false) to (bottomfalse);
\draw (bottomfalse) to (bottom);
\draw (p1) to (p2);
\draw (p2) to (p3);
\draw (p3) to (p4);
\draw (p4) to (p5);
\draw (bottom) to (p5);
\draw (ind) to (top);
\draw (bottom) to (outd);
\draw (inp) to (p1);
\draw (p4) to [out=350, in=90] (outp);
\draw (p5) to (outw);
\end{tikzpicture}
\end{center}
\caption{Variable gadget for \game{UPF}. Symbols $1$ and $0$ denote vertices corresponding to the variable $x_i$ being set to {\ttt} or \fff, respectively.}
\label{fig:upfvar}
\end{figure}
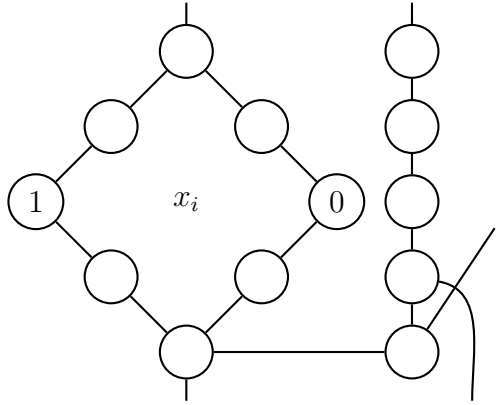

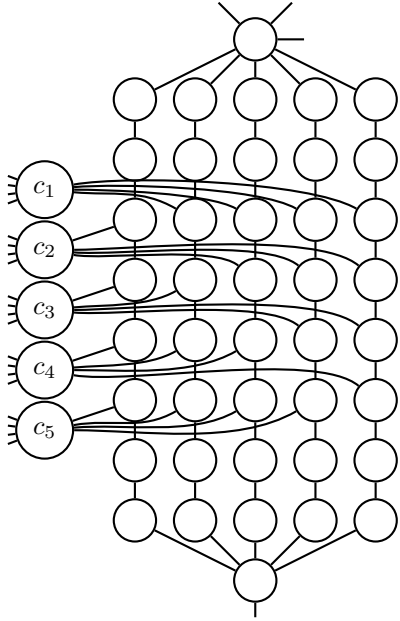
\begin{figure}
\begin{center}
\begin{tikzpicture}[thick, node distance=1.5cm, circ/.style={draw, circle, minimum size=16pt}, scale=0.8]

\node[circ] at (0, 0) (top) {};
\node[circ] at (-2cm, -1cm) (c11) {};
\node[circ] at (-2cm, -2cm) (c12) {};
\node[circ] at (-2cm, -3cm) (c13) {};
\node[circ] at (-2cm, -4cm) (c14) {};
\node[circ] at (-2cm, -5cm) (c15) {};
\node[circ] at (-2cm, -6cm) (c16) {};
\node[circ] at (-2cm, -7cm) (c17) {};
\node[circ] at (-2cm, -8cm) (c18) {};
\node[circ] at (-1cm, -1cm) (c21) {};
\node[circ] at (-1cm, -2cm) (c22) {};
\node[circ] at (-1cm, -3cm) (c23) {};
\node[circ] at (-1cm, -4cm) (c24) {};
\node[circ] at (-1cm, -5cm) (c25) {};
\node[circ] at (-1cm, -6cm) (c26) {};
\node[circ] at (-1cm, -7cm) (c27) {};
\node[circ] at (-1cm, -8cm) (c28) {};
\node[circ] at (0cm, -1cm) (c31) {};
\node[circ] at (0cm, -2cm) (c32) {};
\node[circ] at (0cm, -3cm) (c33) {};
\node[circ] at (0cm, -4cm) (c34) {};
\node[circ] at (0cm, -5cm) (c35) {};
\node[circ] at (0cm, -6cm) (c36) {};
\node[circ] at (0cm, -7cm) (c37) {};
\node[circ] at (0cm, -8cm) (c38) {};
\node[circ] at (1cm, -1cm) (c41) {};
\node[circ] at (1cm, -2cm) (c42) {};
\node[circ] at (1cm, -3cm) (c43) {};
\node[circ] at (1cm, -4cm) (c44) {};
\node[circ] at (1cm, -5cm) (c45) {};
\node[circ] at (1cm, -6cm) (c46) {};
\node[circ] at (1cm, -7cm) (c47) {};
\node[circ] at (1cm, -8cm) (c48) {};
\node[circ] at (2cm, -1cm) (c51) {};
\node[circ] at (2cm, -2cm) (c52) {};
\node[circ] at (2cm, -3cm) (c53) {};
\node[circ] at (2cm, -4cm) (c54) {};
\node[circ] at (2cm, -5cm) (c55) {};
\node[circ] at (2cm, -6cm) (c56) {};
\node[circ] at (2cm, -7cm) (c57) {};
\node[circ] at (2cm, -8cm) (c58) {};
\node[circ] at (0, -9cm) (bottom) {};

\node[circ] at (-3.5cm, -2.5cm) (c1) {{\footnotesize $c_1$}};
\node[circ] at (-3.5cm, -3.5cm) (c2) {{\footnotesize $c_2$}};
\node[circ] at (-3.5cm, -4.5cm) (c3) {{\footnotesize $c_3$}};
\node[circ] at (-3.5cm, -5.5cm) (c4) {{\footnotesize $c_4$}};
\node[circ] at (-3.5cm, -6.5cm) (c5) {{\footnotesize $c_5$}};

\node (in1) at (-0.8cm, 0.8cm) {};
\node (in2) at (0.8cm, 0.8cm) {};
\node (in3) at (1cm, 0cm) {};
\node (out) at (0, -9.8cm) {};

\node (c1out1) at (-4.3cm, -2.2cm) {};
\node (c1out2) at (-4.3cm, -2.4cm) {};
\node (c1out3) at (-4.3cm, -2.6cm) {};
\node (c1out4) at (-4.3cm, -2.8cm) {};
\node (c2out1) at (-4.3cm, -3.2cm) {};
\node (c2out2) at (-4.3cm, -3.4cm) {};
\node (c2out3) at (-4.3cm, -3.6cm) {};
\node (c2out4) at (-4.3cm, -3.8cm) {};
\node (c3out1) at (-4.3cm, -4.2cm) {};
\node (c3out2) at (-4.3cm, -4.4cm) {};
\node (c3out3) at (-4.3cm, -4.6cm) {};
\node (c3out4) at (-4.3cm, -4.8cm) {};
\node (c4out1) at (-4.3cm, -5.2cm) {};
\node (c4out2) at (-4.3cm, -5.4cm) {};
\node (c4out3) at (-4.3cm, -5.6cm) {};
\node (c4out4) at (-4.3cm, -5.8cm) {};
\node (c5out1) at (-4.3cm, -6.2cm) {};
\node (c5out2) at (-4.3cm, -6.4cm) {};
\node (c5out3) at (-4.3cm, -6.6cm) {};
\node (c5out4) at (-4.3cm, -6.8cm) {};

\draw (top) to (c11);
\draw (c11) to (c12);
\draw (c12) to (c13);
\draw (c13) to (c14);
\draw (c14) to (c15);
\draw (c15) to (c16);
\draw (c16) to (c17);
\draw (c17) to (c18);
\draw (c18) to (bottom);
\draw (top) to (c21);
\draw (c21) to (c22);
\draw (c22) to (c23);
\draw (c23) to (c24);
\draw (c24) to (c25);
\draw (c25) to (c26);
\draw (c26) to (c27);
\draw (c27) to (c28);
\draw (c28) to (bottom);
\draw (top) to (c31);
\draw (c31) to (c32);
\draw (c32) to (c33);
\draw (c33) to (c34);
\draw (c34) to (c35);
\draw (c35) to (c36);
\draw (c36) to (c37);
\draw (c37) to (c38);
\draw (c38) to (bottom);
\draw (top) to (c41);
\draw (c41) to (c42);
\draw (c42) to (c43);
\draw (c43) to (c44);
\draw (c44) to (c45);
\draw (c45) to (c46);
\draw (c46) to (c47);
\draw (c47) to (c48);
\draw (c48) to (bottom);
\draw (top) to (c51);
\draw (c51) to (c52);
\draw (c52) to (c53);
\draw (c53) to (c54);
\draw (c54) to (c55);
\draw (c55) to (c56);
\draw (c56) to (c57);
\draw (c57) to (c58);
\draw (c58) to (bottom);

\draw (c1) to [out=355, in=150] (c23);
\draw (c1) to [out=0, in=150, looseness=0.5] (c33);
\draw (c1) to [out=5, in=150, looseness=0.5] (c43);
\draw (c1) to [out=10, in=150, looseness=0.5] (c53);
\draw (c2) to (c13);
\draw (c2) to [out=350, in=140, looseness=0.5] (c34);
\draw (c2) to [out=355, in=140, looseness=0.5] (c44);
\draw (c2) to [out=0, in=140, looseness=0.5] (c54);
\draw (c3) to (c14);
\draw (c3) to [out=5, in=210, looseness=0.8] (c24);
\draw (c3) to [out=355, in=140, looseness=0.5] (c45);
\draw (c3) to [out=0, in=140, looseness=0.5] (c55);
\draw (c4) to (c15);
\draw (c4) to [out=5, in=210, looseness=0.8] (c25);
\draw (c4) to [out=0, in=210, looseness=0.8] (c35);
\draw (c4) to [out=350, in=140, looseness=0.5](c56);
\draw (c5) to (c16);
\draw (c5) to [out=10, in=210, looseness=1] (c26);
\draw (c5) to [out=5, in=210, looseness=0.8] (c36);
\draw (c5) to [out=0, in=210, looseness=0.8] (c46);

\draw (in1) to (top);
\draw (in2) to (top);
\draw (in3) to (top);
\draw (bottom) to (out);

\draw (c1) to (c1out1);
\draw (c1) to (c1out2);
\draw (c1) to (c1out3);
\draw (c1) to (c1out4);
\draw (c2) to (c2out1);
\draw (c2) to (c2out2);
\draw (c2) to (c2out3);
\draw (c2) to (c2out4);
\draw (c3) to (c3out1);
\draw (c3) to (c3out2);
\draw (c3) to (c3out3);
\draw (c3) to (c3out4);
\draw (c4) to (c4out1);
\draw (c4) to (c4out2);
\draw (c4) to (c4out3);
\draw (c4) to (c4out4);
\draw (c5) to (c5out1);
\draw (c5) to (c5out2);
\draw (c5) to (c5out3);
\draw (c5) to (c5out4);
\end{tikzpicture}
\end{center}
\caption{Clause selection gadget for \game{UPF} with $m=5$}
\label{fig:upfcl}
\end{figure}

\begin{figure}
\begin{center}
\input{UPF.tex}
\end{center}
\caption{Schematic diagram \game{UPF} construction from \game{TQBF} instance $\exists x_1\forall x_2\cdots\exists x_{n-1}\forall x_n\pb{x_1\vee x_2\vee\bar{x}_n}\wedge c_2\wedge\cdots\wedge c_m$. Linkers adjacent to $c_2$ through $c_m$ exist but are not shown.}
\label{fig:upf}
\end{figure}

Here is one possible way the game could progress, which we refer to as \emph{proper play}. We split the progression into three phases:
\begin{description}
\item[Phase I:] This phase consists of one round for each variable gadget, along with a move by each player between each pair of successive rounds. The rounds for odd-indexed variables are begun by Left; the rounds for even-indexed variables are begun by Right. In a single round, four moves are made by the first player and three by the second. Whichever player begins spends their moves moving from the top vertex of the diamond to the bottom vertex of the diamond. They delete the top vertex on the first move and the left or right vertex on the third move (whichever direction they choose to travel). The second move can either delete the vertex moved from or a vertex in a linker. The fourth move must delete the fifth vertex in the path for this variable if it is still present. Otherwise, it can delete either neighbor in the current diamond. The other player spends their three moves moving regularly along their path, except that the third move is also permitted to delete the fifth vertex in the path.

Between successive rounds, the player in the path, currently on its fourth vertex, makes a regular move to the top vertex of the next variable's diamond, and then the player in the diamond, currently in its bottom vertex, moves regularly to the first vertex of the next variable's path.
\item[Phase II:] This phase begins after the final round of Phase I, with Left's token on the fourth vertex of the $x_n$ path, Right's vertex on the bottom vertex of the $x_n$ diamond, and Left to move. It consists of $m+7$ moves by each player. All of Left's moves in Phase II are regular moves through the delay path. Right's first $m+4$ moves are through the clause selection gadget, along one of the clause deletion paths. On each move that ends adjacent to a clause vertex, Right deletes that clause vertex, resulting in the deletion of all but one clause vertex. The rest of Right's moves through the clause selection gadget are regular, with the possible exception of the move to the bottom of the gadget, which can delete any neighbor in the clause selection gadget. Right's subsequent three moves are into and along the escape path. The first two of these moves are regular; the third deletes \textsc{prize}.
\item[Phase III:] This phase begins with Left in the delay path on the vertex that was adjacent to \textsc{prize} before Right deleted it and with Right in the escape path on the vertex that was adjacent to \textsc{prize}. It is Left's turn. Here is the sequence of moves Left makes:
\begin{itemize}
\item First, she continues to move regularly through the delay path.
\item Next, she enters the \emph{clause connector stalling stage}. There are $m-1$ clause connectors that are no longer connected to a clause. She moves into one of them, deleting the subsequent vertex in that connector, then she returns regularly to the last delay vertex. She does this two move sequence for each of these $m-1$ clause connectors.
\item Once those moves are all done, she enters and moves regularly along the clause connector connected to the one remaining clause vertex until she reaches the clause vertex.
\item Now, Left enters the \emph{clause deletion path stalling stage}. The clause vertex is adjacent to $m-1$ vertices in $m-1$ clause deletion paths, which have no connections remaining between them except through the clause vertex. She moves into one of the clause deletion paths, deleting the vertex in that path below the vertex moved to. Then, she moves upward along the path, deleting the vertex above the one moved to. 
She then takes two regular moves to return to the clause vertex. She does this four move sequence for each of the $m-1$ clause deletion paths. See Figure~\ref{fig:upfcdpdel} for an illustration of this four move sequence.
\item Once those moves are all done, she enters the \emph{linker stalling stage}. The clause vertex is adjacent to three linkers. She moves into one of them, deleting the second vertex, then she returns regularly to the clause vertex. Left then repeats this process for a second linker.
\item Finally, Left enters and moves along the third linker, potentially reaching a variable diamond.
\end{itemize}
Meanwhile, Right's moves are all regular along the escape path.
\end{description}
Note that, in Phase~III, Left makes two moves in the delay path, followed by $2m-2$ moves in the clause connector stalling stage, followed by $m$ moves through the last clause connector and one move to the clause vertex, followed by $4m-4$ moves in the clause deletion path stalling stage, followed by four moves in the linker stalling stage. So, when Left is on the clause vertex about to enter the last linker, she has made a total of $7m+1$ moves (and therefore Right has made the same number of moves).

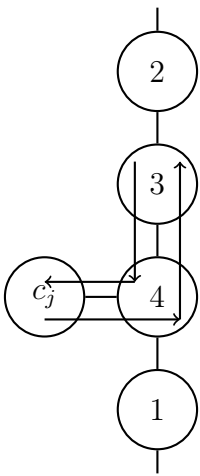
\begin{figure}
\begin{center}
\begin{tikzpicture}[thick, node distance=1.5cm, circ/.style={draw, circle, minimum size=30pt}]

\node[circ, align=center] at (0, 0) (clause) {$c_j$};
\node[circ, align=center] at (1.5cm, 0) (neighbor) {4};
\node[circ, align=center] at (1.5cm, -1.5cm) (ndown) {1};
\node[circ, align=center] at (1.5cm, 1.5cm) (nup) {3};
\node[circ, align=center] at (1.5cm, 3cm) (nupup) {2};
\node at (1.5cm, 4cm) (nupupup) {};
\node at (1.5, -2.5cm) (ndowndown) {};

\draw (clause) to (neighbor);
\draw (neighbor) to (ndown);
\draw (neighbor) to (nup);
\draw (nup) to (nupup);
\draw (nupup) to (nupupup);
\draw (ndown) to (ndowndown);

\draw [->] (0, -0.3) to (1.8, -0.3);
\draw [->] (1.8, -0.3) to (1.8, 1.8);
\draw [->] (1.2, 1.8) to (1.2, 0.2);
\draw [->] (1.2, 0.2) to (0, 0.2);
\end{tikzpicture}
\end{center}
\caption{Illustration of the clause deletion path stalling stage of Phase~III. Arrows indicate moves; numbers indicate order vertices are deleted.}
\label{fig:upfcdpdel}
\end{figure}

We have the following claims, which would complete our reduction:
\begin{itemize}
\item If the game must progress properly, then $Q$ is true if and only if Left wins the game of \game{UPF}.
\item Neither player can do better than to move properly.
\end{itemize}
For the time being, assume that the game must progress properly. First, suppose $Q$ is true. Then, there is some setting of the variables $x_1,x_3,\ldots,x_{n-1}$ so that the formula is true regardless of the values of $x_2,x_4,\ldots,x_n$. Take such a setting, and, as Left, adopt the following strategy:
\begin{itemize}
\item When moving from the top vertex of the diamond for $x_i$, move left if $x_i$ should be set to \ttt, and move right if $x_i$ should be set to \fff.
\item When entering the third linker from a clause vertex, enter a linker corresponding to a true literal.
\end{itemize}
Since $Q$ is true, at least one of that clause's literals is true. 
As a result, the vertex in a variable gadget that that linker connects to the clause vertex is still present. So, Left has at least $8m+5$ moves remaining at the end of Phase~II. Meanwhile, Right has exactly $8m+4$ moves remaining, as he begins Phase~III on the third vertex in the escape path. Therefore, Left wins, as required.

Now, suppose $Q$ is false. Then, there is some setting of the variables $x_2,x_4,\ldots,x_n$ so that the formula is false regardless of the values of $x_1,x_3,\ldots,x_{n-1}$. Take such a setting, and, as Right, adopt the following strategy:
\begin{itemize}
\item When moving from the top vertex of the diamond for $x_i$, move left if $x_i$ should be set to \ttt, and move right if $x_i$ should be set to \fff.
\item For some $j$ such that clause $c_j$ is false, traverse the clause selection gadget along the $j^{th}$ clause deletion path.
\end{itemize}
Around the end of the game, Left moves to the vertex for $c_j$. Since $c_j$ is false, all of that clause's literals are false. This means that Left has at most $8m+4$ moves available in Phase~III, since each linker from $c_j$ no longer connects to a variable gadget. Right has exactly $8m+4$ moves remaining, and Left moves first. Therefore, Right wins, as required.

What remains to show is that neither player can do better if they deviate from proper play. Each player has several ways in which they can deviate. A common, silly deviation that either player can do at essentially any time is to delete the vertex they would otherwise move to next. Usually, this causes that player to be trapped after one more move. So, we ignore that possibility if it would immediately trap them, and we analyze each other possible deviation.. A key observation is that due to the long length of the win path, any player entering its first vertex definitely wins the game. In each item that follows, suppose that it represents the first deviation by either player in the whole game.
\begin{description}
\item[A variable diamond player deviates in Phase~I:] Any deviation by the variable diamond player (deleting the wrong vertex, moving into a linker, backtracking after deleting a linker vertex) either results in the diamond player immediately being trapped or delays that player's arrival at the bottom vertex of the variable diamond. This allows the path player to reach the fifth vertex of the path and subsequently enter the win path. None of the diamond player's deviations can prevent this victory by the path player.
\item[Right does not move properly in Phase~II:] (This case refers to move deviation, not deletion deviations.) Left reaches \textsc{prize} in $m+8$ moves and subsequently enters the win path and wins, unless Right can delete \textsc{prize} fast enough or if he can otherwise enter the win path sooner. Since the game is in Phase~II and the win path was connected only to the fifth vertices of the variable paths and to \textsc{prize}, the only remaining entry point to the win path is through \textsc{prize}. First, observe that the bottom of the clause selection gadget cannot be reached any faster than $m+4$ moves. Since the clause deletion paths always allow deletion of clauses in ascending order, moving through a clause to jump from one clause deletion path to another can only allow advancing one extra move along that path (at the cost of two moves). Furthermore, the linkers are long enough that they cannot be used to shortcut travel past a large number of clauses. The other way Right could try to race Left to the win path is by traveling to $c_1$ in, say, the last clause deletion path and then through a clause connector and into the delay path. To reach the delay vertex adjacent to \textsc{prize} would take $m+7$ moves: three into the clause deletion path, one to $c_1$, $m$ through the clause connector, and three through the delay path. But, it takes Left $m+7$ moves in Phase~II to reach that same delay vertex, and Left moves first in Phase~II. Therefore, this deviation by Right allows Left to reach \textsc{prize} and, hence, the win path.
\item[Right does not delete properly in Phase~II:] We have just seen that Right must move properly, so Right should never delete the vertex they would otherwise move to. 
The only nontrivial way Right can deviate is by making a regular move along a clause deletion path instead of deleting a clause vertex. Left can still play Phase~III properly, but she might now also have additional moves available. Therefore, this deviation by Right cannot cause him to win an otherwise lost game.
\item[Left deletes \textsc{prize}:] This deviation does not affect the outcome of the game. At the point when Left is able to delete \textsc{prize} at the end of Phase~II, Right is on the second escape vertex with all of his remaining moves forced. Left's only sensible subsequent move is to continue to move along the delay path, and from there no incentive exists to deviate again.
\item[Left deviates in Phase~III:] Since Left is the only player who can make meaningful choices in Phase~III, it suffices only to show that Left cannot win a game where $Q$ is false. In other words, we must show that Left can make no more than $8m+4$ moves in Phase~III if $Q$ is false. Here are all of the nontrivial ways Left can deviate:
\begin{description}
\item[Skipping one or more clause connector stalls:] Each clause connector skipped in the clause connector stalling stage reduces Left's total number of moves available by two, so shortening this stage cannot be beneficial.
\item[Intermingling/reordering the later stalling phases:] Left can perform the linker stalling sequences and the clause deletion path stalling sequences in whatever order she pleases without increasing or decreasing the total number of moves.
\item[Doing a third linker stall and ending up in a clause deletion path:] First of all, this requires replacing a clause deletion path stalling sequence (four moves) with a linker stalling sequence (two moves). Then, there are at most $m+3$ moves available in the clause deletion path (including the move used to enter it), since that is the number of vertices in a clause deletion path. Since there are also $m+3$ vertices in a linker, this reduces Left's total moves available by at least two.
\end{description}
\end{description}

This completes the reduction showing that $\text{\game{TQBF}}\leq_p \game{UPF}$. Therefore, \game{UPF} is \ps-hard, and, hence, \ps-complete, as required.
\end{proof}

\section{Undirected Impartial Geography}\label{s:ui1}

The game \game{UIR} is known in the literature as \game{Undirected Geography} or as \game{Undirected Vertex Geography,} and it is known to be in \textsc{p}~\cite{fsugeog}. The N-positions are those position where the token is on a vertex that is saturated by (i.e. included in) every maximum-size matching of the underlying graph, a condition which can be checked in polynomial time. Left wins by moving along an edge of some maximum-size matching. If the vertex with the token is not saturated by some maximum-size matching, then the position is a P-position, as every one of that vertex's neighbors must be saturated by every maximum-size matching excluding that vertex. We claim that, when played on a bipartite graph, the game \game{UIF} has the same N and P-positions, and Left can use the same strategy to win as in \game{UIR}. Our proof depends on the known strategy for \game{UIR}.

\begin{theorem}\label{thm:uif}
The game \game{UIF} is in \textsc{p} when the underlying graph is bipartite. In particular, a position is an N-position if and only if the vertex with the token is saturated by every maximum-size matching of the underlying graph, and Left can win from an N-position by moving regularly along an edge of some maximum-size matching.
\end{theorem}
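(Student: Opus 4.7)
The approach is to mirror the proof for \game{UIR} and to handle \game{UIF}'s extra deletion freedom via alternating-path arguments on maximum matchings of bipartite graphs. I proceed by strong induction on the number of vertices of the underlying graph $G$. Call a vertex \emph{essential} if it lies in every maximum matching; equivalently, removing it strictly decreases the matching number. The characterization has two directions, each reducing to a matching-theoretic lemma.

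For the forward direction, suppose $v$ is essential in $G$. Left plays the regular move $(v, w, v)$ along an edge $\{v, w\}$ in some maximum matching $M$, and it remains to show $(G - v, w)$ is a P-position. Right's reply is $w \to u$ for some $u \in N_{G-v}(w)$, deleting some $x \in N_{G-v}(u)$. If $x = w$ (the restricted-deletion move), the standard \game{UIR} argument goes through: $M' = M - \{v, w\}$ is a maximum matching of $G - v - w$ not saturating $w$, and $M'$ must saturate $u$ lest $\{w, u\}$ be an $M'$-augmenting edge, from which one shows $u$ is essential in $G - v - w$. For $x \neq w$ I claim $u$ is essential in $G - v - x$. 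Suppose for contradiction that some maximum matching $N$ of $G - v - x$ of size $|M| - 1$ avoids $u$. Split on whether $\{u, x\} \in M'$. If $\{u, x\} \notin M'$, then $M'$ is a maximum matching of $G - v - x$ as well, and inspection of the alternating path in $M' \triangle N$ starting at $u$ (with its first edge in $M'$) shows that either a parity imbalance forces another component of the symmetric difference to be an $M'$-augmenting path in $G - v - x$, or prepending $\{w, u\}$ to this path itself produces such an augmenting path. If $\{u, x\} \in M'$, a short augmenting trick shows $\nu(G - v - x) = |M| - 1$ (so $N$ is max in $G - v$ too), and the alternating path from $u$ in $M' \triangle N$ collapses to the single edge $\{u, x\}$; the resulting parity mismatch forces another component of $M' \triangle N$ to be an $M'$-augmenting path in $G - v$, contradicting maximality of $M'$. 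By induction $(G - v - x, u)$ is N, so Right loses.

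For the reverse direction, suppose $v$ is not essential, so some maximum matching $M$ leaves $v$ unsaturated; note $M$ must saturate every neighbor of $v$, else $\{v, w\}$ would augment $M$. I need to show any Left move $(v, w, u)$ yields an N-position, which reduces to showing $w$ is essential in $G - u$. If $u = v$, this is the classical \game{UIR} argument via $\nu(G - v - w) < \nu(G - v) = \nu(G)$. For $u \neq v$, the argument splits on whether $u \in V(M)$. If $u \notin V(M)$, then $M$ is already a maximum matching of $G - u$, and any hypothetical matching of $G - u - w$ of the same size $|M|$ would leave both $u$ and $w$ unsaturated in $G$, immediately giving the augmenting edge $\{u, w\}$ and contradicting maximality of $M$. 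If $u \in V(M)$, an analogous alternating-path argument (considering symmetric differences with $M$ and the augmenting potential offered by $v$'s being unsaturated together with the edge $\{v, w\}$) rules out any matching of $G - u - w$ of size $\nu(G - u)$. Thus $(G - u, w)$ is N by induction, Left loses from $(G, v)$, and the characterization is verified. The polynomial-time algorithm is immediate: compute a maximum matching of $G$ and test whether removing the token vertex decreases the matching number.

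The principal obstacle is the irregular-deletion case of the forward direction: \game{UIF} allows Right to delete a vertex other than the one the token just left, and this requires a genuinely new matching-theoretic argument beyond what \game{UIR} needs. Bipartiteness enters exactly at this step, since the alternating-path surgery on $M' \triangle N$ decomposes cleanly into simple paths and even cycles with balanced edge-type counts; without bipartiteness, Edmonds' blossom structure obstructs this decomposition and the analogous claim becomes subtle.
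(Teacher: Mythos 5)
Your approach---a direct strong induction establishing both directions of the N/P characterization---is genuinely different from the paper's, which argues by minimal counterexample. The paper assumes $G$ is a smallest bipartite graph where some position is P in \game{UIR} but N in \game{UIF}; minimality forces the winning move to be the very first move and to be irregular, say $\pb{v,w,u}$ with $u\neq v$, and the game thereafter coincides with \game{UIR}. From this single move one immediately extracts two maximum matchings with specific properties ($M_1$ missing $v$; $M_2$ containing $uw$ and saturating $u$), and the contradiction comes from a short case analysis of $M_1\cup M_2$: different components, or the same component with $w$ before $u$ (giving an odd cycle, the only place bipartiteness is used), or with $u$ before $w$. That route leans heavily on the known \game{UIR} theory and never has to re-verify the characterization move by move.

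Your proposal has two concrete problems. First, the forward direction is doing unnecessary work and that unnecessary work is where the errors sit. Once you've fixed the inductive hypothesis (``the characterization holds for all bipartite graphs with fewer vertices''), the forward direction is one line: $v$ essential means $\nu(G-v)=\nu(G)-1$, so $M-\st{v,w}$ is a maximum matching of $G-v$ missing $w$, so $w$ is not essential in $G-v$, so $\pb{G-v,w}$ is a P-position by IH. There is no need to unroll a further move by Right. The unrolled analysis you attempt is also wrong as written: in the sub-case $\st{u,x}\notin M'$, you assert $M'$ is a maximum matching of $G-v-x$, but that requires $x\notin V(M')$, not merely $\st{u,x}\notin M'$; if $x$ is matched in $M'$ to a third vertex, $M'$ is not even a matching of $G-v-x$. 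You also assume $\nu(G-v-x)=\ab{M}-1$ without justification; when $x$ is essential in $G-v$ and $\st{u,x}\notin M'$ it can equal $\ab{M}-2$, and your argument would then be comparing matchings of the wrong sizes. Second, and more importantly, the place where the real content lives---the reverse direction with $u\neq v$ and $u\in V(M)$---is the analogue of the paper's entire $M_1\cup M_2$ analysis, and you dismiss it with ``an analogous alternating-path argument.'' That case is not a routine analogue: one has to control where $v$, $w$, and $u$ sit in $M\triangle N$, and the parity/odd-cycle argument that uses bipartiteness appears precisely there. Until that case is spelled out, the proof has a genuine gap. Your closing remark correctly identifies that bipartiteness blocks odd-cycle components, but the argument you sketch around it does not yet hold together.
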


We use the following lemma:
\begin{lemma}\label{lem:uif}
Let $G$ be a graph, and let $M_1$ and $M_2$ be matchings in $G$ each with the maximum possible number of edges. Each connected component of $M_1\cup M_2$ has an even number of edges and is either a path or a cycle.
\end{lemma}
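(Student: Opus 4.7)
The plan is to carry out the classical augmenting-path argument, with the only subtlety being how to handle edges in $M_1 \cap M_2$. I would interpret $M_1 \cup M_2$ as the multigraph in which each edge of $M_1 \cap M_2$ appears with multiplicity two, so that a shared edge contributes a $2$-cycle rather than looking like a degenerate single-edge component; this convention makes the ``even number of edges'' claim come out uniformly.

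The first step is to note that each vertex is incident to at most one edge of $M_1$ and at most one edge of $M_2$, so it has degree at most $2$ in the multigraph $M_1 \cup M_2$. Since every connected graph with maximum degree at most $2$ is a path or a cycle, this immediately establishes the shape claim for each connected component.

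Next I would verify that within each component the edges alternate between $M_1$ and $M_2$: any vertex of degree $2$ in the component must be incident to exactly one edge from each matching, since otherwise it would be covered twice by the same matching. For cycle components, this alternation forces the length to be even. For path components, I would argue by contradiction: suppose some path $P = v_0 v_1 \cdots v_k$ has odd length $k$. Then both end-edges lie in the same matching, say $M_1$, so $v_0$ and $v_k$ are incident to no edge of $M_2$ within the component; because the component is a maximal piece of $M_1 \cup M_2$, they must be unmatched by $M_2$ entirely. Thus $P$ is an $M_2$-augmenting path, and $M_2 \triangle P$ is a matching with $|M_2| + 1$ edges, contradicting the maximality of $M_2$.

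The only place that requires real care is the interaction with $M_1 \cap M_2$; fixing the multigraph convention at the outset sidesteps it cleanly. Beyond that bookkeeping, the argument is a direct application of the standard fact that a maximum matching admits no augmenting path, so I do not anticipate any substantial obstacle.
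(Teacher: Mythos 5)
Your proof is correct and takes essentially the same route as the paper: both use the degree-at-most-$2$ observation to classify components as paths or cycles, and both appeal to the maximality of the matchings to rule out odd-length path components. The one genuine difference is that you explicitly introduce the multigraph convention for edges in $M_1 \cap M_2$, treating a shared edge as a $2$-cycle. This is a real improvement: under the plain simple-graph reading, a shared edge $uv \in M_1 \cap M_2$ whose endpoints touch no other matching edge forms a component that is a path with one (odd) edge, which the lemma's conclusion nominally excludes and which the paper's odd-path argument (``one of the matchings has one more edge in $H$ than the other'') does not actually cover, since in that case $|M_1 \cap H| = |M_2 \cap H| = 1$. The paper's later use of the lemma in Theorem~\ref{thm:uif} avoids this corner case (the components considered always contain a vertex outside one of the matchings or additional edges forcing more structure), so the gap is harmless in context, but your convention closes it cleanly. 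Your formulation of the odd-path contradiction via an $M_2$-augmenting path and $M_2 \triangle P$ is equivalent to the paper's construction $M = (M_1 \cap H) \cup (M_2 \cap (G - H))$; they produce the same larger matching when $H$ is a path.
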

\begin{proof}
The fact that each component is a path or cycle follows from the fact that $M_1$ and $M_2$ are matchings, meaning every vertex in $M_1\cup M_2$ has degree at most $2$. It immediately follows that any cycle must have an even number of edges, as otherwise two edges in the same matching would be incident. Now, suppose for a contradiction that a component $H$ of $M_1\cup M_2$ is a path with an odd number of edges. This means that one of the matchings (assume $M_1$ without loss of generality) has one more edge in $H$ than the other matching. Let $M=\pb{M_1\cap H}\cup\pb{M_2\cap\pb{G-H}}$. Note that $M$ is a matching of $G$ consisting of all of the edges of $M_2$ except in $H$, where it uses the edges of $M_1$. This means that $M$ contains more edges than $M_2$, contradicting the maximality of $M_2$. Therefore, $H$ has an even number of edges, as required.
\end{proof}

We now prove Theorem~\ref{thm:uif}.
\begin{proof}
Let $G$ be a bipartite graph, and let $v$ be the vertex of $G$ with the token on it. 
Suppose for a contradiction that this is a P-position of \game{UIR} but an N-position of \game{UIF}, meaning that a player can gain an advantage by moving irregularly. Furthermore, suppose $G$ has the fewest vertices among all bipartite 
graphs in positions where an irregular move can be beneficial. Note that $G$ has at least three vertices, since the position with a single vertex is a P-position in both \game{UIF} and \game{UIR}, and the positions with two vertices are either P-positions in both games (two isolated vertices) or N-positions in both games (a single edge).

Since the current position is a P-position for \game{UIR}, we know that $v$ is not saturated by some maximum-size matching $M_1$ of $G$. Note that every neighbor of $v$ must be saturated by $M_1$, as otherwise the matching's size could be increased by matching $v$ with an unmatched neighbor. The minimality property of $G$ means that the first move must involve deleting a vertex other than $v$ and that the rest of the game can be played as if it were \game{UIR} without affecting the outcome. Let $w$ be the neighbor of $v$ that Left should move to to win, and let $u$ be the neighbor of $w$ that Left should delete to win. The position after the move $\pb{v,w,u}$ is identical to the position after the move $\pb{u,w,u}$ if the token had started on $u$ instead of $v$. Since this winning move is a legal and winning move in \game{UIR}, it must be the case that $u$ is saturated by every maximum-size matching of $G$ and that $uw$ is an edge in at least one of these matchings. Let $M_2$ be such a matching. Note that $v$ must be matched to some vertex $x$ in $M_2$, as otherwise, $M_2-\st{uw}\cup\st{vw}$ would be a matching of $G$ of the same size as $M_2$, excluding $u$.

Let $H=M_1\cup M_2$. Let $H_v$ denote the connected component of $H$ containing $v$, and let $H_w$ denote the component of $H$ containing $w$. Since $uw\in M_2$, we have that $u\in H_w$ also. By Lemma~\ref{lem:uif}, each of $H_v$ and $H_w$ has an even number of edges and is either a path or a cycle. In particular, $H_v$ must be a path with $v$ at one end, since $M_1$ excludes $v$. There are two cases to consider:
\begin{description}
\item[$H_v\neq H_w$:] Let $M=\pb{M_1\cap H_v}\cup\pb{M_2\cap\pb{G-H_v}}$. This is a matching of $G$ consisting of all of the edges of $M_2$ except in $H_v$, where it uses the edges of $M_1$. This means that $M$ has the same number of edges as $M_2$, so it is a maximum-size matching that excludes $v$ and includes $uw$, a contradiction.
\item[$H_v=H_w$:] Since $uw\in M_2$, $H_v$ must consist of a path from $v$ to either $w$ or $u$ not through the other, then the edge $uw$, then at least one additional edge $wy$ (since $w$, a neighbor of $v$, is matched in $M_1$ and $u$ is saturated by every maximum-size matching, including $M_1$).

Suppose first that $w$ is encountered before $u$ when traveling from $v$ along $H_v$. Then, there is some edge $ut\in M_1$ incident to $u$. The path $P$ in $H_v$ from $v$ to $w$ has an even number of edges. Then, since $v$ and $w$ are neighbors in $G$, the cycle $C=P\cup\st{vw}$ has odd length, 
contradicting the fact that $G$ is bipartite.

Now, suppose that $u$ is encountered before $w$ when traveling from $v$ along $H_v$. Let $P$ denote the path in $H_v$ from $v$ to $u$, and let $Q$ denote the path in $H_v$ from $w$ to the other endpoint of $H_v$. Note that $P$ has an even number of edges and $Q$ has an odd number of edges, containing one more edge of $M_1$ than $M_2$. Let $M=\pb{M_2\cap P}\cup\pb{M_1\cap Q}\cup\pb{M_1\cap\pb{G-H_v}}$. This is a matching of the same size as $M_1$ that excludes $u$, a contradiction.
\end{description}
Therefore, $G$ cannot exist, and no advantage can be gained in \game{UIF} on 
a bipartite graph by moving irregularly, as required.
\end{proof}

Theorem~\ref{thm:uif} is false if the bipartite condition is dropped. Only one sub-case of the proof depends on the graph being bipartite, but it is possible to construct counterexamples based on that case. The smallest non-bipartite counterexample, with five vertices and five edges, is depicted in Figure~\ref{fig:uifcounter}. If the token begins on the vertex $v$ in the figure, the resulting position is a P-position for \game{UIR} and an N-position for \game{UIF}. This graph has four maximum size matchings (two edges each). Three of them consist of $ut$ and an edge of the triangle; the fourth matching is $\st{uw, vx}$. Note that $u$ is saturated by every one of these matchings, but $v$ is excluded by the matching $\st{ut, xw}$. Since $v$ is saturated by every maximum-size matching that includes $uw$, we have that the depicted position is a P-position in \game{UIR} and an $N$-position in \game{UIF} via the move $\pb{v,w,u}$.

\begin{figure}
\begin{center}
\begin{tikzpicture}[thick, node distance=1.5cm, circ/.style={draw, circle, minimum size=35pt}]

\node[circ] at (0, 0) (v) {$v, T$};
\node[circ] at (0, -2cm) (x) {$x$};
\node[circ] at (1.5cm, -1cm) (w) {$w$};
\node[circ] at (3.5cm, -1cm) (u) {$u$};
\node[circ] at (5.5cm, -1cm) (z) {$t$};

\draw (v) to (x);
\draw (v) to (w);
\draw (x) to (w);
\draw (w) to (u);
\draw (u) to (z);
\end{tikzpicture}
\end{center}
\caption{Smallest counterexample to Theorem~\ref{thm:uif} on a non-bipartite graph.}
\label{fig:uifcounter}
\end{figure}
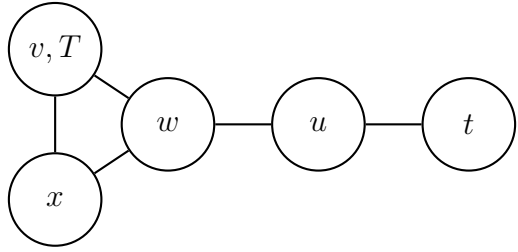

\section{Undirected Impartial Geography with Stacking}\label{s:uik}
As discussed in Section~\ref{s:ui1}, the game \sgame{UIR}{1} is in \textsc{p}. As a consequence of Proposition~\ref{prop:1vs2}, the game \sgame{UIR}{2} is also known to be in \textsc{p}. We now show that the game \sgame{UIR}{4} is \ps-complete, even on bipartite graphs with maximum degree at most $3$. Unlike our other reductions, we reduce from \game{Geography} rather than from \game{TQBF}.
\begin{theorem}\label{thm:uir4}
The game \sgame{UIR}{4} is \ps-complete, even when restricted to bipartite graphs with maximum degree at most $3$.
\end{theorem}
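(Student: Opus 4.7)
The plan is to reduce from \game{Geography} on bipartite graphs with degree at most $3$, which is \ps-complete~\cite{geogreduct}. Given such a directed instance $\pb{G, v_0}$, I construct an equivalent instance $\pb{G', v_0}$ of \sgame{UIR}{4} by replacing each arc of $G$ with an undirected \emph{arc gadget} whose vertex heights enforce one-way passage. Each original vertex of $G$ is kept in $G'$, with one arc gadget per incident arc, so the degree bound and, by using gadget paths with an odd number of edges, the bipartiteness of $G$ carry over to $G'$.

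The heart of the construction is the arc gadget, which exploits the move restriction: from a vertex of height $k$, the destination must have current height at least $k-1$. This acts as a diode, since a height-$4$ vertex cannot move directly to a vertex of height $2$ or less. A natural candidate for the gadget replacing $u\to v$ is a short path $u, p_1, \ldots, p_\ell, v$, with original vertices at height $4$, the $u$-adjacent internal vertex at height $3$, and the $v$-adjacent internal vertex at height $2$. Then $u\to p_1$ is legal (since $3\geq 3$) while $v\to p_\ell$ is blocked (since $2 < 3$), making entry from the $v$-side impossible from the first step, and the internal heights can be calibrated so that the forward traversal is legal and the gadget is effectively consumed after one traversal. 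I would also add auxiliary structure at each original vertex to mimic the delete-after-move rule of \game{DIR}: once the token leaves an original vertex along an outgoing gadget, it must be prevented from ever returning via an incoming gadget. A plausible device is to split each original vertex $v$ into an ``in'' vertex $v_{\mathrm{in}}$ of height $1$ and an ``out'' vertex $v_{\mathrm{out}}$ of height $4$ joined by an edge; upon arrival at $v_{\mathrm{in}}$ the only legal continuation is $v_{\mathrm{in}} \to v_{\mathrm{out}}$, which deletes $v_{\mathrm{in}}$ and blocks any subsequent arrival at $v$.

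Given a correctly functioning gadget, I would then show that optimal play in \sgame{UIR}{4} on $G'$ mirrors optimal play in \game{DIR} on $G$: at each original ``out'' vertex, the active player's only meaningful choice is which outgoing gadget to enter, matching the choice of arc in \game{DIR}, while moves internal to a gadget are forced or can be canonicalized without changing the outcome. Each possible deviation---wrong-side entry, partial reversal, or backtracking within a partially-consumed gadget---must be ruled out by a case analysis showing that it strands the deviating player or otherwise leads to a losing position. The main obstacle is the rigorous verification of the arc gadget's one-way property in the presence of partially consumed heights, and of the auxiliary ``in/out'' mechanism in the presence of many arc gadgets meeting at one vertex; once these are verified, structural preservation (bipartiteness, degree $\leq 3$) and polynomial-time constructibility follow routinely.
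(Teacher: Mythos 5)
Your reduction targets \game{Geography} and attempts to use the height-based move restriction as a ``diode,'' which is the right intuition, but the construction as described has a genuine gap at the point where you try to combine the diode with the delete-after-move mechanism. The claim that ``upon arrival at $v_{\mathrm{in}}$ the only legal continuation is $v_{\mathrm{in}} \to v_{\mathrm{out}}$'' is false: the move restriction constrains only moves \emph{out} of high-height vertices, so from a height-$1$ vertex the token may legally move to \emph{any} neighbor, of any height. In particular, from $v_{\mathrm{in}}$ (height $1$) the token can legally step into the height-$2$ endpoint $p_\ell$ of a different incoming arc gadget, and then traverse that gadget backwards: $p_\ell$ (height $2$) to $p_{\ell-1}$ (height $\geq 2$) needs only height $\geq 1$, and so on up to $u'_{\mathrm{out}}$ (height $4$), with every step satisfying the restriction. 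This simulates an illegal reverse-arc move in the original \game{Geography} instance. The diode you correctly identified when both endpoints have height $4$ disappears precisely because the receiving end has been replaced by a height-$1$ vertex. More broadly, any low-height vertex is immune to the move restriction, so the restriction cannot by itself protect a gadget from being entered at a low-height end; something else must make the wrong-side entry \emph{losing}, and your proposal does not supply that.

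The paper's proof avoids this by using \emph{vertex} gadgets rather than arc gadgets: each vertex $v$ of $G$ becomes a five-vertex path $v_1,\dots,v_5$ with initial heights $[2,4,3,1,1]$, and each arc $vw$ becomes an edge $v_5 w_1$. The crucial trap is the height-$4$ vertex $v_2$: if a player improperly enters the gadget at $v_5$ (height $1$), the forced sequence $v_5 \to v_4 \to v_3 \to v_2$ strands the token on $v_2$, because both its neighbors now have height $\leq 2 < 4-1$. This is backed by a brute-force case analysis (Lemma~\ref{lem:20133}) of play within a single such path, which pins down the possible residual heights of a properly traversed gadget and shows that re-entering a non-pristine gadget is also fatal. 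If you want to salvage your arc-gadget approach you would need to embed an analogous height-$4$ trap in the gadget (or otherwise prove that reverse traversal leads to a loss), and you would need to redo the delete-after-move argument without the false ``only legal continuation'' claim.
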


We use the following lemma:
\begin{lemma}\label{lem:20133}
Consider the game of \sgame{UIR}{4} played on a path of five vertices $v_1$, $v_2$, $v_3$, $v_4$, and $v_5$ with initial heights $2$, $4$, $3$, $1$, and $1$ respectively with the token starting on $v_1$ (see Figure~\ref{fig:20133}). Modify the game so that it is a draw if the token reaches $v_5$ with Left to move. Then, under optimal play by both players, this game is a draw, and the ending heights are one of the following sequences (listed from $v_1$ through $v_5$): $[1,2,1,0,1]$, $[1,1,0,0,1]$, or $[0,0,0,0,1]$.
\end{lemma}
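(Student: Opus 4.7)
The plan is a direct case analysis of the game tree starting from the initial position $[2,4,3,1,1]$ with the token on $v_1$ and Left to move. The key observation is that the height restriction (from a vertex of height $k$, the destination must have height at least $k-1$) forces most of the early moves, so the branching factor stays small and complete enumeration is feasible.

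First I would verify that the opening four plies are forced. Left must move token $v_1 \to v_2$ (reducing $v_1$ to height $1$); Right is on $v_2$ of height $4$ and needs a neighbor of height $\geq 3$, so the only legal move is $v_2 \to v_3$; then Left on $v_3$ of height $3$ needs a neighbor of height $\geq 2$, so $v_4$ (height $1$) is blocked and she must return to $v_2$; Right again on $v_2$ of height $3$ must continue to $v_3$ since $v_1$ has only height $1$. After these four forced moves the heights are $[1,2,2,1,1]$ with the token on $v_3$ and Left to move.

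Now Left has two legal choices (both neighbors have height $1$ or greater, and $v_3$ itself has height $2$). I would split into the two branches and recurse: in Branch A, Left plays $v_3 \to v_4$, producing $[1,2,1,1,1]$, and Right has two choices; checking each, $v_4 \to v_3$ leads to a forced sequence in which Left eventually wins with final heights $[0,1,0,0,1]$, while $v_4 \to v_5$ ends in a draw with final heights $[1,2,1,0,1]$, so Right prefers the draw. In Branch B, Left plays $v_3 \to v_2$, producing $[1,2,1,1,1]$ with token on $v_2$ and Right to move; Right's two options $v_2 \to v_1$ and $v_2 \to v_3$ both lead, after forced continuations, to draws with final heights $[0,0,0,0,1]$ and $[1,1,0,0,1]$ respectively (in the latter, Left's subsequent choice $v_3 \to v_4$ is strictly better for her than $v_3 \to v_2$, which would lose). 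Thus both branches are drawn, so Left is indifferent and the value of the position is a draw; the drawing continuations produce exactly the three final height vectors $[1,2,1,0,1]$, $[1,1,0,0,1]$, and $[0,0,0,0,1]$ listed in the statement.

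The only real obstacle is bookkeeping: at each node one must carefully check which moves are legal under the height-$k$-to-height-$\geq k-1$ rule, update the height vector correctly, and distinguish who the losing player would be if the game ended naturally versus the draw stipulation when the token lands on $v_5$ with Left to move. I would present the analysis as a small tree diagram or table listing, for each position reached, the height vector, the active player, the legal moves, and the resulting value; because each branch terminates within at most a handful of plies after the first decision, the verification is entirely mechanical once the forcing of the opening four moves is established.
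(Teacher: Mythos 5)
Your proposal is correct and follows essentially the same approach as the paper: you verify that the opening four plies are forced (arriving at heights $[1,2,2,1,1]$ with the token on $v_3$ and Left to move), then brute-force the small game tree from there, obtaining exactly the same branching structure, the same winners for each subbranch, and the same three drawing height vectors. The only cosmetic difference is that the paper prefaces the case analysis with the observation that, by bipartite parity, the token reaching $v_5$ always coincides with Left being to move; you verify this implicitly branch-by-branch rather than stating it up front, which is fine.
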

\begin{figure}
\begin{center}
\begin{tikzpicture}[thick, node distance=1.5cm, circ/.style={draw, circle, minimum size=30pt}]

\node[circ] at (0, 0) (v1) {$2T$};
\node[circ] at (2cm, 0) (v2) {$4$};
\node[circ] at (4cm, 0) (v3) {$3$};
\node[circ] at (6cm, 0) (v4) {$1$};
\node[circ] at (8cm, 0) (v5) {$1$};

\node at (0, -1cm) (v1l) {$v_1$};
\node at (2cm, -1cm) (v2l) {$v_2$};
\node at (4cm, -1cm) (v3l) {$v_3$};
\node at (6cm, -1cm) (v4l) {$v_4$};
\node at (8cm, -1cm) (v5l) {$v_5$};

\draw (v1) to (v2);
\draw (v2) to (v3);
\draw (v3) to (v4);
\draw (v4) to (v5);
\end{tikzpicture}
\end{center}
\caption{The game under consideration in Lemma~\ref{lem:20133}}
\label{fig:20133}
\end{figure}
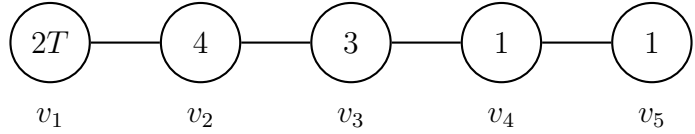
\begin{proof}
Note first that the underlying graph is bipartite, so it is always Left's turn if the token's vertex has odd index and Right's turn if the index is even. So, if the token reaches $v_5$, it is always Left's turn and hence a draw.

This game is small enough that it can be solved by brute force. The first four moves are forced: to $v_2$, to $v_3$, to $v_2$, and to $v_3$. The resulting position has heights $[1,2,2,1,1]$ and the token on $v_3$ with Left to move. She can move to either $v_2$ or $v_4$, resulting in the heights $[1,2,1,1,1]$ in either case.

Suppose first that Left moves to $v_2$ on the fifth move. Then, Right can move to either $v_1$ or $v_3$, making all heights equal $1$ in the process. If he moves to $v_1$, all remaining moves are forced, and the token ends up on $v_5$ with heights $[0,0,0,0,1]$. If Right moves to $v_3$, left loses immediately if she moves to $v_2$, so she moves to $v_4$, forcing Right to move to $v_5$, drawing the game with heights $[1,1,0,0,1]$.

Now, suppose that Left moves to $v_4$ on the fifth move. Right can immediately draw the game by moving to $v_5$, leading to heights $[1,2,1,0,1]$. Right's other possible move is to $v_3$, also leaving these heights. This forces the rest of the game to proceed by Left moving to $v_2$, Right to $v_1$,  and Left back to $v_2$, at which point Right loses as the heights are $[0,1,0,0,1]$. So, Right should draw the game on the sixth move rather than play the losing move to $v_3$.

Figure~\ref{fig:20133tree} depicts the full game tree. A position is labeled $(L)$ if Left wins, $(R)$ if Right wins, or $(D)$ if the position is a draw.
\end{proof}

\begin{figure}
\begin{center}
\input{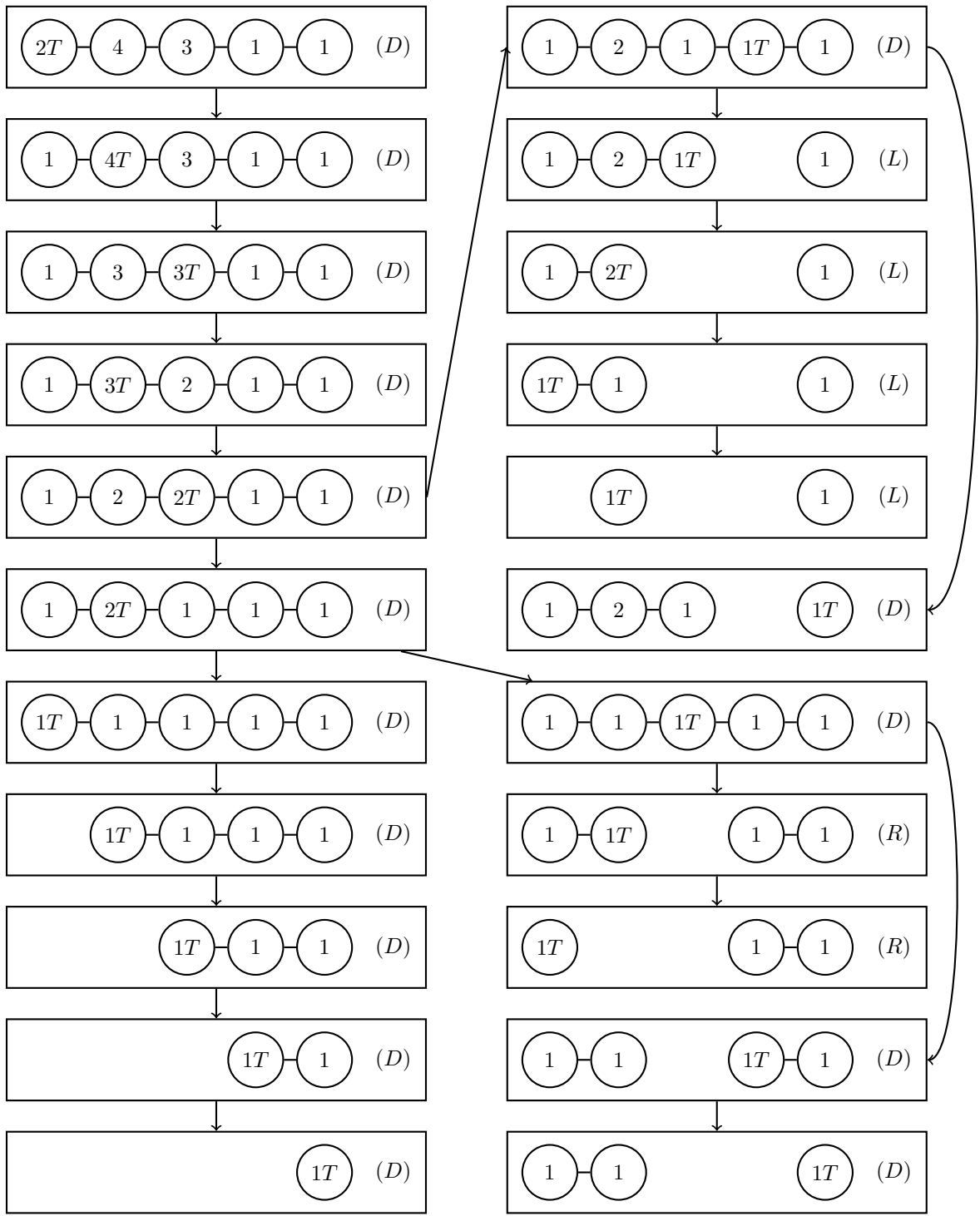}
\end{center}
\caption{The game tree analyzed in the proof of Lemma~\ref{lem:20133}}
\label{fig:20133tree}
\end{figure}

We now prove Theorem~\ref{thm:uir4}.

\begin{proof}
We proceed via a reduction from \game{Geography} (\sgame{DIR}{1}) to \sgame{UIR}{4}. Consider an instance of \game{Geography} on underlying digraph $G$ with the token on vertex $t$. We construct an instance of \sgame{UIR}{4} on graph $G'$ as follows. 
\begin{itemize}
\item Each vertex $v$ in $G$ corresponds to five vertices $v_1$, $v_2$, $v_3$, $v_4$, and $v_5$ in $G'$. The initial heights of these vertices are $2$, $4$, $3$, $1$, and $1$ respectively.
\item The token begins on vertex $t_1$.
\item For each arc $vw$ in $G$, there is an edge $v_5w_1$ in $G'$.
\end{itemize}
See Figure~\ref{fig:uir4} for an example of this construction.

\begin{figure}
\begin{center}
\begin{tikzpicture}[thick, node distance=1.5cm, circ/.style={draw, circle, minimum size=25pt}, scale=0.9]

\node[circ] at (0, 0) (v) {$T$};
\node[circ] at (0, -3) (w) {};
\node[circ] at (0, -6) (x) {};
\node[circ] at (2, -1.5) (y) {};
\node[circ] at (2, -4.5) (z) {};

\draw[->] (v) to (w);
\draw[->] (v) to (y);
\draw[->] (w) to (x);
\draw[->] (x) to (z);
\draw[->] (y) to (w);
\draw[->] (y) to (z);
\draw[->] (z) to (w);
\draw[->] (z) to [out=50, in=20] (v);

\node at (3.5, -3) (arrow) {$\Longrightarrow$};

\node[circ] at (5, 0) (v1) {$2T$};
\node[circ] at (6.5, 0) (v2) {$4$};
\node[circ] at (8, 0) (v3) {$3$};
\node[circ] at (9.5, 0) (v4) {$1$};
\node[circ] at (11, 0) (v5) {$1$};

\node[circ] at (5, -3) (w1) {$2$};
\node[circ] at (6.5, -3) (w2) {$4$};
\node[circ] at (8, -3) (w3) {$3$};
\node[circ] at (9.5, -3) (w4) {$1$};
\node[circ] at (11, -3) (w5) {$1$};

\node[circ] at (5, -6) (x1) {$2$};
\node[circ] at (6.5, -6) (x2) {$4$};
\node[circ] at (8, -6) (x3) {$3$};
\node[circ] at (9.5, -6) (x4) {$1$};
\node[circ] at (11, -6) (x5) {$1$};

\node[circ] at (8, -1.5) (y1) {$2$};
\node[circ] at (9.5, -1.5) (y2) {$4$};
\node[circ] at (11, -1.5) (y3) {$3$};
\node[circ] at (12.5, -1.5) (y4) {$1$};
\node[circ] at (14, -1.5) (y5) {$1$};

\node[circ] at (8, -4.5) (z1) {$2$};
\node[circ] at (9.5, -4.5) (z2) {$4$};
\node[circ] at (11, -4.5) (z3) {$3$};
\node[circ] at (12.5, -4.5) (z4) {$1$};
\node[circ] at (14, -4.5) (z5) {$1$};

\draw (v1) to (v2);
\draw (v2) to (v3);
\draw (v3) to (v4);
\draw (v4) to (v5);

\draw (w1) to (w2);
\draw (w2) to (w3);
\draw (w3) to (w4);
\draw (w4) to (w5);

\draw (x1) to (x2);
\draw (x2) to (x3);
\draw (x3) to (x4);
\draw (x4) to (x5);

\draw (y1) to (y2);
\draw (y2) to (y3);
\draw (y3) to (y4);
\draw (y4) to (y5);

\draw (z1) to (z2);
\draw (z2) to (z3);
\draw (z3) to (z4);
\draw (z4) to (z5);

\draw (v5) to [out=210, in=80] (w1);
\draw (v5) to [out=245, in=35, looseness=0] (y1);
\draw (w5) to [out=210, in=80, looseness=0.7] (x1);
\draw (x5) to (z1);
\draw (y5) to [out=210, in=30] (w1);
\draw (y5) to [out=260, in=30, looseness=0.7] (z1);
\draw (z5) to [out=150, in=-30, looseness=0.45] (w1);
\draw (z5) to [out=70, in=30, looseness=1.6] (v1);
\end{tikzpicture}
\end{center}
\caption{\sgame{UIR}{4} construction from a given \game{Geography} instance}
\label{fig:uir4}
\end{figure}
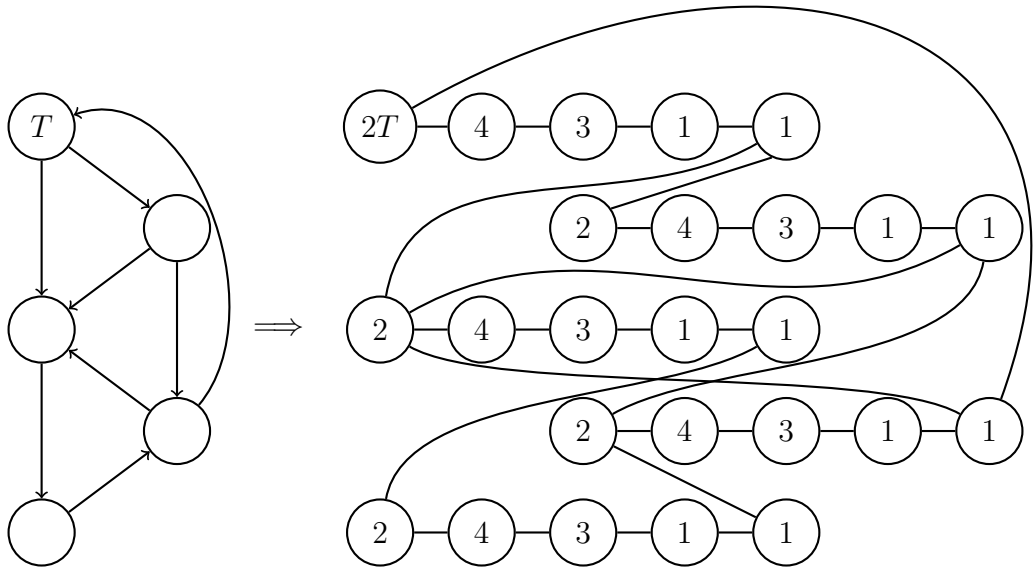

Refer to the set of five vertices in $G'$ obtained from a vertex of $G$ as an \emph{meta-vertex.} Say that a meta-vertex is \emph{pristine} if no move has been made from one vertex in the meta-vertex to another. We say that play of this game proceeds \emph{properly} if the play obeys the following guiding principles for as long as possible.
\begin{itemize}
\item If the token is not on a vertex of the form $v_5$, then the token is moved within the current meta-vertex in such a way that each player avoids losing.
\item If the token is on a vertex of the form $v_5$, then the token is moved to a vertex of the form $w_1$ within a pristine meta-vertex.
\end{itemize}

We have the following claims, which would complete our reduction:
\begin{itemize}
\item Any player making an improper move loses.
\item If the game progresses properly for as long as possible, then the winner of the \game{Geography} position is the same as the winner of the corresponding \sgame{UIR}{4} position.
\end{itemize}

We show first that any improper move leads to loss. First, we show that neither player has an incentive to make a move from a vertex $v_1$ to a vertex $w_5$. Without loss of generality, suppose that Left makes such a move as the first improper move of the game. Since the vertex moved to has height $1$, it cannot have been moved to previously. In other words, the move is to a pristine meta-vertex. Then, Right can move from $w_5$ to $w_4$, which results in the deletion of $w_5$. This forces Left's next move to be from $w_4$ to $w_3$, deleting $w_4$. Then, Right moves from $w_3$ to $w_2$ decreasing the height of $w_3$ to $2$ and trapping Left on $w_2$, which has height $4$. So, a player making an improper move of this type loses.

Now, consider the other possible type of improper move: a move from a vertex $v_5$ to a vertex $w_1$ of a non-pristine meta-vertex. Without loss of generality, suppose that Left makes such a move as the first improper move of the game.  We claim that the heights of $w_1$ through $w_5$ must be $\bk{1,2,1,0,1}$ or $\bk{1,1,0,0,1}$. Since this meta-vertex is not pristine, play must have properly proceeded through it previously. Since we know that all moves previously made involving that meta-vertex started with the token on $w_1$ and remained within the vertex until the token reached $w_5$ and subsequently departed, play must have proceeded as if the game in Lemma~\ref{lem:20133} were being played within that vertex. So, the final heights must be as described by Lemma~\ref{lem:20133}. Since the move from $v_5$ to $w_1$ is possible, it cannot be that $w_1$ has height $0$, so the heights must be $\bk{1,2,1,0,1}$ or $\bk{1,1,0,0,1}$, as required.

If the heights are $\bk{1,1,0,0,1}$, Right immediately wins by moving from $w_1$ to $w_2$. If the heights are $\bk{1,2,1,0,1}$, Right can move to $v_2$, then Left to $v_3$, then Right back to $v_2$, again trapping Left. So, a player making an improper move of this type also loses.

Now, we know that both players are incentivized to move properly whenever possible. By Lemma~\ref{lem:20133}, if $G$ consists of a single vertex, Right wins both the \game{Geography} position and the \sgame{UIR}{4} position, as a draw in the lemma is a loss for Left in this situation. Now, suppose $G$ has at least two vertices, and suppose inductively that the winners of the \game{Geography} position and the corresponding \sgame{UIR}{4} position are the same for any \game{Geography} position with fewer vertices than $G$. Note that, as long as the token remains in the meta-vertex of $G'$ corresponding to $v$ in $G$, it is always Left's move when the token is on $v_1$ or $v_5$ because each meta-vertex is bipartite. Since Left is never incentivized to move improperly, play through the first meta-vertex proceeds according to Lemma~\ref{lem:20133}, as the best Left can hope for from the first meta-vertex is to progress to another meta-vertex. (In other words, Left cannot win within the first meta-vertex.) Consider the \sgame{UIR}{4} position reached once Left makes a move from $v_5$ to some vertex $w_1$. This position is the same as the \sgame{UIR}{4} position corresponding to the \game{Geography} position resulting from Left moving from $v$ to $w$ in $G$, except that vertices $v_1$, $v_2$, and $v_3$ may still exist in $G'$. Moving to $v_1$ later would be an improper move, and hence losing, so the current position has the same winner as the one in which the entire meta-vertex corresponding to $v$ has been deleted. By our minimality property, the winner of the current position is the same as the winner of the \game{Geography} position after Left moves from $v$ to $w$ in $G$. Since Left can move to the meta-vertex for $w$ in $G'$ if and only if she can move to $w$ in $G$, the initial \sgame{UIR}{4} position has the same winner as the initial \game{Geography} position, as required.

This completes the reduction showing that $\text{\game{Geography}}\leq_p \text{\sgame{UIR}{4}}$. Therefore, \sgame{UIR}{4} is \ps-hard, and, hence, \ps-complete, as required.

Finally, we know that \game{Geography} is still \ps-complete when restricted to bipartite graphs of maximum in/out degree at most $2$ and maximum total degree at most $3$. We claim that if $G$ has these properties, then $G'$ is bipartite with maximum degree at most $3$. First, suppose that $G$ has bipartition $\pb{A,B}$. We construct a bipartition $\pb{A',B'}$ of $G'$ as follows: If $v\in A$, then $v_1,v_3,v_5\in A'$ and $v_2,v_4\in B'$, and vice versa if $v\in B$. Second, each incoming arc to a vertex $v$ of $G$ translates to an edge of $G'$ incident to $v_1$, and each outgoing arc translates to an edge of $G'$ incident to $v_5$. If the maximum in/out degree of $G$ is at most $2$, then there are at most two such edges incident to each of these. Since each of $v_1$ and $v_5$ is adjacent to one additional vertex ($v_2$ or $v_4$), each of these vertices has degree at most $3$. This means that $G'$ has maximum degree at most $3$, since vertices with indices $2$, $3$, or $4$ have degree $2$.
\end{proof}

\section{Summary and Open Problems}\label{s:concl}

Prior to this work, the complexities of only three of the games \sgame{XYZ}{1} were known (\game{DIR}, \game{DPR}, and \game{UIR}). We resolve the complexities of four more of these games (\game{UPR}, \game{DIF}, \game{DPF}, and \game{UPF}) and a restricted case of the fifth (\game{UIF}). An obvious open problem is to fully resolve the complexity of \game{UIF}. Furthermore, the complexity of each game except \game{UPF} is known for bipartite graphs. Also, in Section~\ref{s:uik}, we determine that \sgame{UIR}{4} is \ps-complete. Since we also know that \sgame{UIR}{2} is in \textsc{p}, this leaves open the complexity of the game \sgame{UIR}{3}.

As mentioned in the introduction, the \game{Geography} variants studied in this paper are motivated by the board game \textit{Santorini}. Future developments could continue to bridge the gap between \game{Geography} and \textit{Santorini}. Here are several more variant rules that can be considered:
\begin{itemize}
\item The underlying graph can be restricted to come from some additional families closed under taking induced subgraphs. We have seen bipartite, planarity, and degree restrictions.
\emph{Santorini} is played on a grid (including diagonal adjacencies), so another restriction worth studying is grids and their induced subgraphs.
\item \textit{Santorini} has a winning condition other than the opponent having no legal move. In the language of our  \game{Geography} variants with stacked vertices, this condition is equivalent to ending one's move on a vertex of height $1$. Introducing alternative winning conditions is an avenue for further study.
\item In \textit{Santorini}, each player has two tokens, and the players choose each token's starting position in turn order prior to moving them. Adding additional tokens or token-placement moves could theoretically affect the complexities of games. Some work in this direction has been undertaken for other games in the \game{Geography} family~\cite{fsgeog}.
\item In our proof of Theorem~\ref{thm:uir4}, we assign each vertex a starting height. The position we construct could not have been obtained from an initial position with all vertices having maximum height, which may be a desirable condition to consider.
\item Throughout our work, we treat the maximum height as a fixed parameter, as the length of the game has a polynomial dependence on the maximum height $k$, and hence an exponential dependence on $\log k$, the number of bits needed to represent $k$. This suggests that if maximum height is unrestricted, the game is unlikely to be in \ps. The question then is, what is the complexity of this unrestricted game?
\end{itemize}

\bibliography{bibliography.bib}

@incollection{karp,
  title={Reducibility among combinatorial problems},
  author={Karp, Richard M.},
  booktitle={Complexity of computer computations},
  pages={85--103},
  year={1972},
  publisher={Springer},
}

@book{winways,
 author={Berlekamp, Elwyn R and Conway, John Horton and Guy, Richard K},
 title = {Winning Ways for Your Mathematical Plays},
 publisher = {A. K. Peters},
 year = {2001},
}

@article{fsugeog,
  title={Undirected edge geography},
  author={Fraenkel, Aviezri S and Scheinerman, Edward R and Ullman, Daniel},
  journal={Theor. Comput. Sci.},
  volume={112},
  number={2},
  pages={371--381},
  year={1993},
  publisher={Elsevier},
  annotate={Undirected Geography in P and undirected partizan geography NP-hard}
}

@article{fsgeog,
author = {Fraenkel, Aviezri and Simonson, Shai},
year = {1993},
month = {03},
pages = {197-214},
title = {Geography},
volume = {110},
journal = {Theor. Comput. Sci.},
doi = {10.1016/0304-3975(93)90356-X},
annotate={Partizan geography PSPACE-hard}
}

@article{schaefergeog,
  title={On the complexity of some two-person perfect-information games},
  author={Schaefer, Thomas J},
  journal={J. Comput. System Sci.},
  volume={16},
  number={2},
  pages={185--225},
  year={1978},
  publisher={Elsevier},
  annotate={First proof of PSPACE-hardness of Geography}
}

@article{geogreduct,
  title={G{O} is polynomial-space hard},
  author={Lichtenstein, David and Sipser, Michael},
  journal={J. ACM},
  volume={27},
  number={2},
  pages={393--401},
  year={1980},
  publisher={ACM New York, NY, USA},
  annotate={Classical Geography reduction}
}

@article{nim,
  title={Nim, a game with a complete mathematical theory},
  author={Bouton, Charles L},
  journal={Ann. of Math.},
  volume={3},
  number={1/4},
  pages={35--39},
  year={1901},
  publisher={JSTOR}
}

@book{gareyjohnson,
  title={Computers and intractability},
  author={Garey, Michael R and Johnson, David S},
  volume={174},
  year={1979},
  publisher={Freeman San Francisco}
}

@misc{santorini,
  author={Gordon Hamilton},
  title={Santorini},
  year={2017},
  publisher={Roxley Games}
}

@article{monti2018variants,
  title={On variants of Vertex Geography on undirected graphs},
  author={Monti, Angelo and Sinaimeri, Blerina},
  journal={Discrete Appl. Math.},
  volume={251},
  pages={268--275},
  year={2018},
  publisher={Elsevier}
}

@article{bodlaender1993complexity,
  title={Complexity of path-forming games},
  author={Bodlaender, Hans L},
  journal={Theoret. Comput. Sci.},
  volume={110},
  number={1},
  pages={215--245},
  year={1993},
  publisher={Elsevier}
}

@inproceedings{darmann2014shortest,
  title={The shortest path game: Complexity and algorithms},
  author={Darmann, Andreas and Pferschy, Ulrich and Schauer, Joachim},
  booktitle={{I}{F}{I}{P} {I}nternational {C}onference on {T}heoretical {C}omputer {S}cience},
  pages={39--53},
  year={2014},
  organization={Springer}
}

@inproceedings{chess,
  title={Computing a perfect strategy for n$\times$ n chess requires time exponential in n},
  author={Fraenkel, Aviezri S and Lichtenstein, David},
  booktitle={International Colloquium on Automata, Languages, and Programming},
  pages={278--293},
  year={1981},
  organization={Springer}
}

\end{document}